\newcolumntype{L}[1]{>{\raggedright\let\newline\\\arraybackslash\hspace{0pt}}m{#1}}
\newcolumntype{C}[1]{>{\centering\let\newline\\\arraybackslash\hspace{0pt}}m{#1}}
\newcolumntype{R}[1]{>{\raggedleft\let\newline\\\arraybackslash\hspace{0pt}}m{#1}}
\definecolor{mgreen}{rgb}{0,0.2,0.2}
\definecolor{mblue}{rgb}{0,0,0.8}
\definecolor{mred}{rgb}{0.8,0,0}
\newcommand{\tbirkan}[1]{#1}
\newcommand{\linkl}{\ell}
\newcommand{\nummobile}{N_{m}}
\newcommand{\numlink}{N_{\linkl}}
\newcommand{\numfreq}{N_{f}}
\newcommand{\numFailLinks}{N_{\text{fail}}}
\newcommand{\setFreq}{\mathcal{F}}
\newcommand{\setMobile}{\mathcal{M}}
\newcommand{\setLink}{\mathcal{L}}
\newcommand{\connGraph}{G_{C}}
\newcommand{\intGraph}{G_{I}}
\newcommand{\constraintGraph}{G_{S}}
\newcommand{\intMatrix}{\mathbb{I}}
\newcommand{\intij}[2]{I_{#1#2}}
\newcommand{\tackintij}[2]{T_{#1#2}}
\newcommand{\prx}{P^{\text{Rx}}}
\newcommand{\ptx}{P^{\text{Tx}}}
\newcommand{\polang}[1]{A(#1)}
\newcommand{\prxij}[2]{\prx_{#1#2}}
\newcommand{\rxsens}{T_{s}}
\newcommand{\freqsep}[2]{\Delta_{#1#2}}
\newcommand{\freqsepQuantized}[2]{S_{#1#2}}
\newcommand{\freqshift}{\mbox{$\delta_{f}$}}
\newcommand{\funcAssignment}[2]{\chi(#1, #2)}
\newcommand{\setUsedFreq}{\mbox{$\mathbb{A}$}}
\newcommand{\rangefa}{R}
\newcommand{\bandw}{\mbox{$B$}}
\newcommand{\rangeConstraint}{T_{\rangefa}}
\newcommand{\solnfa}[1]{\mbox{$\mathbb{A}$$_{#1}$}}
\newcommand{\freqsepQuantizedWithFreqs}[4]{S_{#1#2}^{#3#4}}
\newcommand{\ith}{$i$th}
\newtheorem{theorem}{Theorem}
\begin{document}

\title{Frequency Assignment Problem with Net Filter Discrimination Constraints}
\author{H. Birkan Yilmaz, Bon-Hong Koo, Sung-Ho Park, Hwi-Sung Park, Jae-Hyun Ham, and Chan-Byoung Chae
\thanks{This research was supported by the Advanced Core Technology Project (511335-912374101) of the Agency for Defense Development (ADD), Korea.}
\thanks{H.~B.~Yilmaz, B.-H.~Koo, and C.-B.~Chae are with the School of Integrated Technology, Yonsei University, Korea (e-mail: \{birkan.yilmaz, harpeng7675, cbchae\}@yonsei.ac.kr).}%
\thanks{H.-S.~Park and J. H.~Ham are with the Agency for Defense Development, Korea (e-mail: \{7hwisung7, mjhham\}@add.re.kr).}%
\thanks{S.-H.~Park is with OpenSNS, Korea (e-mail: shpark@opensns.co.kr).}
} 

\markboth{submitted to IEEE/KICS JOURNAL OF COMMUNICATIONS AND NETWORKS}{Yilmaz \lowercase{\textit{et al}}.: Frequency Assignment Problem with Net Filter Discrimination Constraints} 

\maketitle

\begin{abstract}
Managing radio spectrum resources is a crucial issue. The frequency assignment problem (FAP) basically aims to allocate, in an efficient manner, limited number of frequencies to communication links. Geographically close links, however, cause interference, which complicates the assignment, imposing frequency separation constraints. The FAP is closely related to the graph-coloring problem and it is an NP-hard problem. \tbirkan{In this paper, we propose to incorporate the randomization into greedy and fast heuristics. As far as being implemented, the proposed algorithms are very straight forward and are without system parameters that need tuned.} The proposed algorithms significantly improve, quickly and effectively, the solutions \tbirkan{obtained by greedy algorithms} in terms of the number of assigned frequencies and the range. \tbirkan{The enhanced versions of proposed algorithms perform close to the lower bounds while running for a reasonable duration.} Another novelty of our study is its consideration of the net filter discrimination effects in the communication model. Performance analysis is done by synthetic and measured data, where the measurement data includes the effect of the real 3-dimensional (3D) geographical features in the Daejeon region in Korea. In both cases, we observe a significant improvement by employing randomized heuristics. 
\end{abstract}

\begin{keywords}
Frequency assignment problem, net filter discrimination, performance evaluation.
\end{keywords}

\section{\uppercase{Introduction}}
\label{sec_introduction}
\PARstart{T}{he} radio spectrum is a scarce and valuable resource.  Thus, managing such a scarce resource effectively is an important issue. In wireless networks, connection links may interfere with each other when they are close to one another with respect to the distance of the geographical locations and the assigned frequencies. To reduce the interference, geographically close links should be assigned frequencies with sufficient separation, which increases the resource requirements~\cite{du2013handbookOC}. The problem of assigning a limited number of radio frequencies to the communication links of a network in such a way that interference requirements are satisfied is the frequency assignment problem (FAP)~\cite{pardalos2013nonlinearAP,mannino2003enumerativeAF,aardal1996branchAC,koster1998partialCS}. The FAP, in its most basic form, has two architectural properties: a limited number of frequency bands and frequency separation constraints due to interference between links.

Formally, FAP can be considered as the collection of a set of nodes, communication links, and frequency bands that need to be assigned to communication links with constraints that stem from the interference caused by the assignment. Each communication link has to be assigned a frequency band such that it will not significantly harm the other communication links. An assignment satisfying these constraints is called a \emph{feasible assignment}. There may be many feasible solutions and the objective of the assignment strategy can be minimizing the number of assigned frequencies. Another useful objective can be minimizing the \emph{range} of the assigned frequencies, i.e., the difference between the largest and smallest assigned frequencies~\cite{koster1998partialCS,flood2013fixedLF}.

The FAP is closely related to the graph-coloring problem in that it is a generalization of that problem. Hence, it is an NP-hard problem, which was discovered by Metzger in 1970~\cite{metzger1970spectrumMT,hale1980frequencyAT}. Therefore, researchers concentrate on developing heuristics to find feasible solutions~\cite{montemanni2010heuristicMT,fischetti2000frequencyAI,maniezzo2000antsHF}. A high-level comparison among several heuristic applications can be found in~\cite{aardal2007modelsAS,hurley1996comparisonOL,lozano2012interferencePA,sharma2014compositeDE,koo2016heuristicsFF}. Applications in the satellite and cellular networks are analyzed in~\cite{graham2008frequencyAM,hurley2000channelAI,wang2015multiobjectiveEA,audhya2013newAT,da2013newMA,qian2015adaptiveSF}.  In~\cite{mannino2003enumerativeAF}, net filter discrimination is taken into account for more realistic communication scenarios with  limited parameters (i.e., the filter effect is only considered for adjacent channels). In \cite{koo2016heuristicsFF}, realistic interference constraints are considered and net filter discrimination is utilized.

In this paper, we propose incorporating randomization into greedy heuristics so that the solution is significantly enhanced. The proposed algorithms are based on graph coloring and prioritizing the highest degree nodes. Our communication model is novel in terms of including a net filter discrimination effect. Moreover, we give a theoretical lower bound for the range of assigned frequencies in terms of Hamiltonian path cost. To verify the proposed system, we carry out extensive numerical simulations based on real geographical features.

The rest of the paper is organized as follows: In Section~\ref{sec_system_model}, we describe the system model including interference graph and channel model. In Section~\ref{sec_problem_definition}, we give the details of the problem within an analytical framework. Section~\ref{sec_algorithms} summarizes the solution methods and algorithms. We present the performance of the proposed algorithms in Section~\ref{sec_performance_analysis}, which is followed by the Conclusion.

\section{\uppercase{System Model}}
\label{sec_system_model}
In this section, we give the details of the system with $\nummobile$ quasi-static mobile nodes with uplink and downlink communication links and coordinates. 

\subsection{Connection and Interference Graphs}
\label{subsec_connection_and_interference_graph}
This system can be represented by a connection graph ${\connGraph=(\setMobile, \setLink)}$ where $\setMobile$ and $\setLink$ are the set of communication nodes and  links, respectively. Each link is an ordered pair of communication nodes such that the order determines the direction of communication (e.g., $\linkl_1 = (M_1,M_2)$) and each node has a coordinate in the geographical area (e.g., $M_1=(x_1,y_1)$). Some of the nodes have wireless communication links. Each of the communication links has a two-way link and each of the nodes has one or zero outgoing/incoming communication links. The goal of FAP is to assign each link a particular frequency from the set of frequency bands $\setFreq$ in a way that the other links' qualities will stay above a required quality while minimizing the number or the range of the assigned frequency bands. Range is defined as the difference between the highest and the lowest frequency that is assigned. 

\begin{figure}
	\centering
	\includegraphics[width=0.99\columnwidth,keepaspectratio]
	{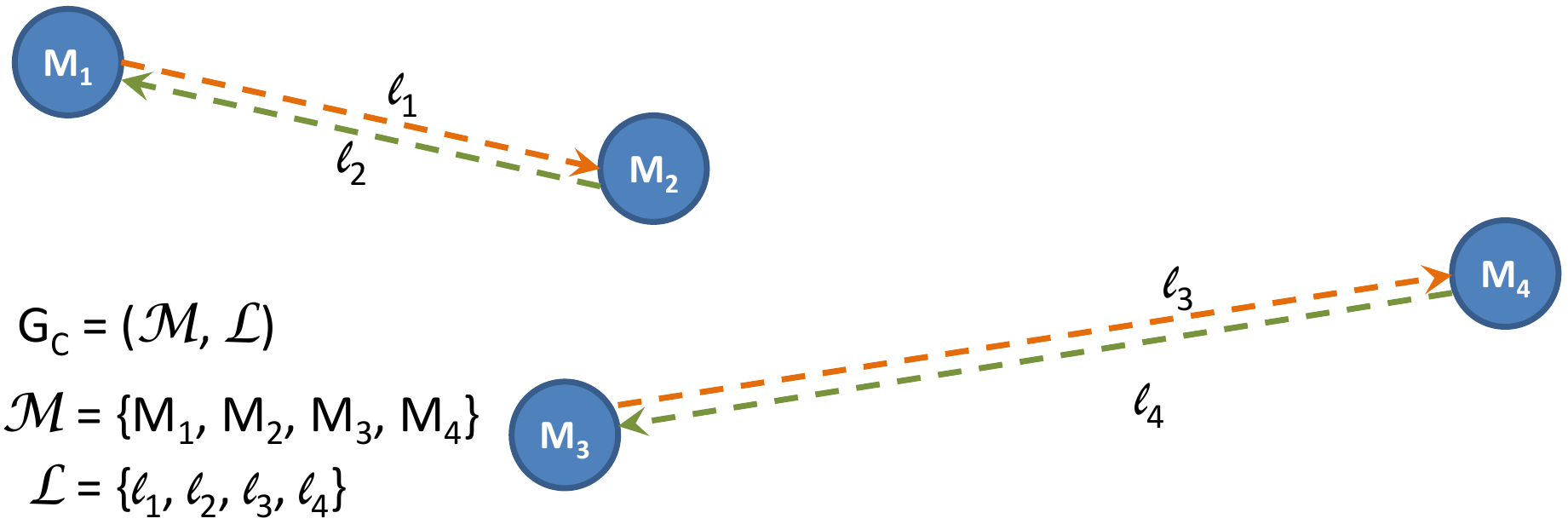}
	\caption{Example topology and system with four nodes.}
	\label{fig_system_topology}
\end{figure}
In Fig.~\ref{fig_system_topology}, we have four nodes (i.e., $\nummobile = 4$) and four communication links (i.e., $\numlink = 4$) between the nodes. When $\linkl_1$ is active (i.e., $M_1$ is transmitting) $\linkl_2$, $\linkl_3$, and $\linkl_4$ are affected with attenuated signals from $M_1$ to the receiving nodes of the links. Evaluating interference from each link to other links results in the interference graph $\intGraph=(V,E)$, where the vertices are the edges of  $\connGraph$ that are corresponding to the set of links (i.e., $V=\setLink$) and all possible link pairs correspond to an interference constraint (i.e., an edge in $E$). Note that the interference constraints are not symmetric. For example, if you consider the interference from $\linkl_1$ to $\linkl_3$, we consider the distance between $M_1$ and $M_4$ while for the interference from $\linkl_3$ to $\linkl_1$ we need to consider the distance between $M_3$ and $M_2$. We define the interference matrix, $\intMatrix$, for $\intGraph$ so that the entries of the matrix correspond to the interference values between links (i.e., $\intij{i}{j}$ is the interference experienced at the receiver of the $\linkl_j$ and caused by the transmitter of the $\linkl_i$).

\subsection{Channel Model}
\label{subsec_channel_model}
ITU-R P.525 model for free space attenuation is used in this paper~\cite{itur1994p525} and given by
\begin{align}
\begin{split}
\prx [dBW]&= \ptx [dBW]+ G - PL  \\
G 	&= 58 - \polang{\phi_t, \phi_r} \\
PL 	&= 32.4 + 20 \log(f \,[MHz]) + 20 \log(d \,[km])
\end{split}
\end{align}
where $\prx$, $\ptx$, $G$, $PL$, $A(.,.)$, $\phi_t$, $\phi_r$, $f$, and $d$ correspond to the received power, transmit power, antenna gain, path loss, antenna pattern function, transmitter antenna angle, receiver antenna angle, frequency, and distance, respectively. The antenna pattern function, which depends on the angle, is illustrated in Fig.~\ref{fig_antenna}, which depends on measurement data.\footnote{We cannot present the measured data for the antenna pattern due to non-disclosure agreement.} We apply our algorithms to both the simplified and the real measured data. The expected interference to tackle (that is from source link $i$ to victim link $j$) is given as
\begin{align}
\begin{split}
\label{eqn_interference_to_tackle}
\tackintij{i}{j} &= \intij{i}{j} - \rxsens = \prxij{i}{j} - \rxsens
\end{split}
\end{align}
where $\prxij{i}{j}$ and $\rxsens$ are the received interference power from link $i$ to $j$ and the receiver sensitivity threshold, which is the required signal strength for a link to be successful.  

\begin{figure}
	\centering
	\includegraphics[width=0.69\columnwidth,keepaspectratio]
	{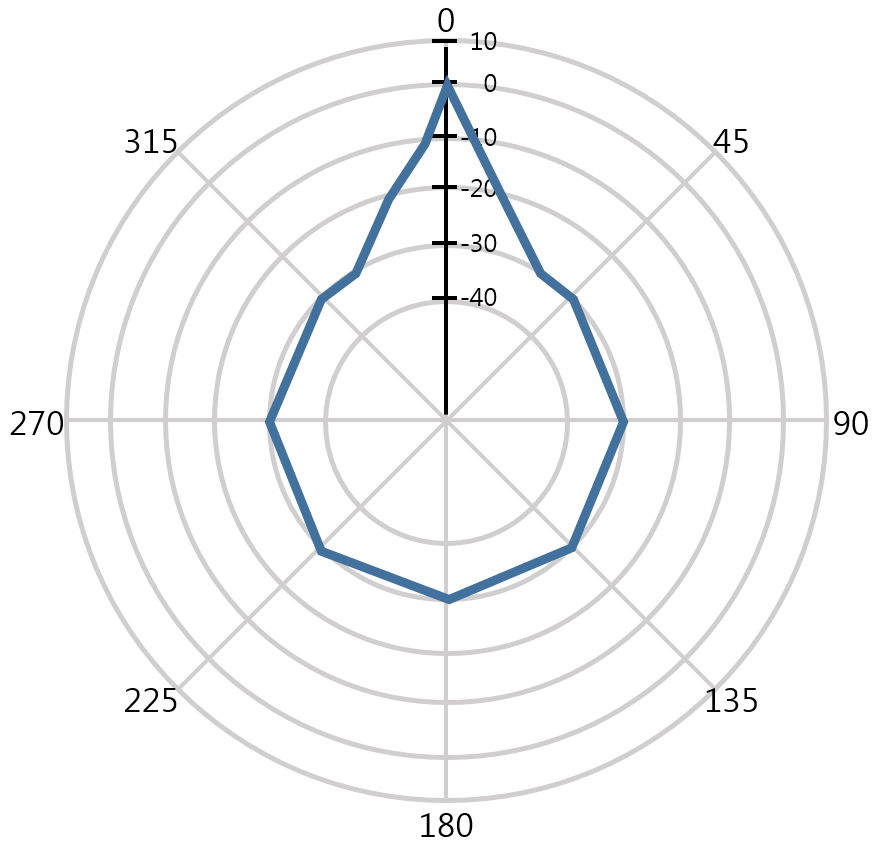}
	\caption{Simplified antenna pattern based on the measured data.}
	\label{fig_antenna}
\end{figure}

At the transmitter and receiver, net filter discrimination (NFD) filtering is utilized for interference. NFD depends upon transmitter spectrum mask and overall receiver filter characteristics. The definition of NFD is given in~\cite{etsi2005tr101} as follows:
\begin{align}
\begin{split}
\text{NFD} (\Delta f)&= 10 \log\left(\frac{P_c}{P_a}\right)\\
P_c &= \int\limits_{0}^{\infty} D(f) \, |H(f)|^2 \,df \\
P_a &= \int\limits_{0}^{\infty} D(f-\Delta f) \, |H(f)|^2 \,df, \\
\end{split}
\end{align}
where $P_c$ denote the total received power after co-channel RF, IF, and base-band filtering, and $P_a$ is the total received power after offset RF, IF, and base-band filtering. $D(f)$, $H(f)$, and $\Delta f$ correspond to the transmitter spectrum mask, overall receiver filter response, and frequency separation between the desired and interference signal, respectively. Note that, $\Delta f = 0$ yields to $0$ dB NFD for the co-channel interference.

\subsection{Frequency Separation}
\label{subsec_freq_sep}
The inverse function of NFD can be utilized for evaluating the pairwise-required frequency separation for the links as follows 
\begin{align}
\begin{split}
\label{eq_freq_sep}
\freqsep{i}{j}  &= \max \{\text{NFD}^{-1}(\tackintij{i}{j}), \, \text{NFD}^{-1}(\tackintij{j}{i})\}
\end{split}
\end{align}
where $\freqsep{i}{j}$ stands for the required frequency separation for the links $i$ and $j$. Note that we have to consider both $\tackintij{i}{j}$ and $\tackintij{j}{i}$ since the interference constraints are not symmetric.

The main goal of the FAP is to assign each link a frequency band from $\setFreq$ while satisfying the frequency separation constraints. Each frequency band has a bandwidth and center frequency that are denoted by $B$ and $f_i$. If the unit frequency band (step size of the overlapping bands) is denoted by $\freqshift$ then the set of frequency bands is defined as
\begin{align}
\begin{split}
\setFreq &= \left\{ \left(f_i\!-\!\frac{B}{2}, \,\, f_i\!+\!\frac{B}{2}\right) \,\, \Big| i=1...\numfreq\right\}\\
f_i & = f_{\text{start}} + \frac{B}{2} +(i-1)\,\, \freqshift  \\
f_{\text{end}} & \geq f_{\numfreq}+\frac{B}{2}
\end{split}
\end{align}
where $\numfreq$, $f_{\text{start}}$ and $f_{\text{end}}$ denote the number of frequency bands, the start and the end frequency for the available range of frequencies. Note that the frequency bands in $\setFreq$ need not to be exclusive but may, \tbirkan{since we use NFD}, intersect. However, the assigned frequency bands should be separated (in terms of center frequencies) according to separation constraints obtained from~\eqref{eq_freq_sep}. \tbirkan{Having overlapping channels increases the number of available frequency bands, thus complicating the solution methods.} In shorthand notation, we use $f_i$ to refer to the frequency band $(f_i-\frac{B}{2}, \,\, f_i+\frac{B}{2})$. 

Due to the discrete nature of $\setFreq$, frequency separation can be done in a quantized manner with multiples of $\freqshift$. Therefore, we convert the frequency separation constraints to quantized frequency separation constraints by 
\begin{align}
\begin{split}
\freqsepQuantized{i}{j} &= \lceil\freqsep{i}{j} / \freqshift \rceil
\end{split}
\end{align}
where $\freqsepQuantized{i}{j}$ corresponds to the required quantized frequency separation between link $\linkl_i$ and $\linkl_j$. These constraints are used in the problem definition and algorithms. To be successful, for example, $\freqsepQuantized{i}{j} = 4$ means the difference of the assigned frequency indices should be at least four for the links $\linkl_i$ and $\linkl_j$.

\section{\uppercase{Problem Definition \& Lower Bound for Range}}
\label{sec_problem_definition}
We define a binary variable that is the assignment variable for the given link and the frequency as 
\begin{align}
\begin{split}
\funcAssignment{\linkl}{f} &=\left\{  
\begin{array}{ll}
1 & \mbox{If $f$ is assigned to the link $\linkl$}\\
0 & \mbox{otherwise}
\end{array} \right.
\end{split}
\end{align}
and a set that contains the assigned frequency indices as
\begin{align}
\begin{split}
\setUsedFreq &= \left\{ k \;\left| \sum\limits_{\linkl} \funcAssignment{\linkl}{f_k} > 0 \right. \right\}
\end{split}
\end{align}
for the problem definitions. Optimization problems aim to minimize the number of used frequencies and the range, $\rangefa$, while considering the frequency separation constraint.

\subsection{Minimizing Number of Used Frequencies}
\label{subsec_min_used_freq}
By using these variables, FAP minimizes the number of used frequencies,  which can be formalized as
\begin{align}
\begin{split}
\mathop{\mbox{Minimize}}\limits_{\funcAssignment{\linkl}{f}} \sum\limits_{f} {\min \Bigg\{1, \,\sum\limits_{\linkl} \funcAssignment{\linkl}{f} \Bigg\}}
\end{split}
\end{align}
subject to
\begin{align}
\label{eq_link_requirement}
\sum\limits_{f} \funcAssignment{\linkl}{f} = 1  & \;\;\; \forall \linkl \\
\label{eq_separation_requirement}
| k\funcAssignment{\linkl_i}{f_k} - m\funcAssignment{\linkl_j}{f_m} | \geq \freqsepQuantizedWithFreqs{i}{j}{f_k}{f_m}  & \;\;\; \forall \linkl_i, \linkl_j, f_k, f_m \\
\rangefa(\setUsedFreq)  & \leq \rangeConstraint \\
\funcAssignment{\linkl}{f} \in \{0,1\}  &\;\;\; \forall \,\linkl, f
\end{align}
where $\setUsedFreq$ is the set of assigned frequencies, $\rangeConstraint$ is the range constraint, $\rangefa(\setUsedFreq) = (\max(\setUsedFreq{}) - \min(\setUsedFreq{})) \,\freqshift{} + \bandw$ is the range of $\setUsedFreq$, and $\freqsepQuantizedWithFreqs{i}{j}{f_k}{f_m} = \freqsepQuantized{i}{j}\,\funcAssignment{\linkl_i}{f_k}\,\funcAssignment{\linkl_j}{f_m}$.

The first constraint in \eqref{eq_link_requirement} guarantees exactly one frequency assignment for each link. The second constraint in \eqref{eq_separation_requirement} enforces the frequency separation requirements. The third constraint concerns the range constraint and guarantees that the range of assigned frequencies is smaller than $\rangeConstraint$.

\subsection{Minimizing Range}
\label{subsec_min_range}
\begin{align}
\begin{split}
\mathop{\mbox{Minimize}}\limits_{\funcAssignment{\linkl}{f}} \;\; \rangefa(\setUsedFreq) 
\end{split}
\end{align}
subject to
\begin{align}
\sum\limits_{f} \funcAssignment{\linkl}{f} = 1  & \;\;\; \forall \linkl \\
| k\funcAssignment{\linkl_i}{f_k} - m\funcAssignment{\linkl_j}{f_m} | \geq \freqsepQuantizedWithFreqs{i}{j}{f_k}{f_m}  & \;\;\; \forall \linkl_i, \linkl_j, f_k, f_m \\
\funcAssignment{\linkl}{f} \in \{0,1\}  &\;\;\; \forall \,\linkl, f
\end{align}

In this optimization problem, the aim is to minimize the range of assigned frequencies while satisfying the separation constraints. 

\subsection{Lower Bound for Number of Used Frequencies}
\tbirkan{
The objective of the FAP is to minimize the number of frequency bands so as to maximize the spectral efficiency. In graph theory, the chromatic number of a graph is defined as the smallest number of colors needed to color the vertices so that adjacent vertices have different colors. The size of the maximum clique of graph $G$ is a lower bound for the chromatic number of $G$ and is denoted by $w(G)$. Hence, we have the following lower bound for the number of assigned frequencies for FAP
\begin{align}
w(G) \leq |\setUsedFreq{}| = \sum\limits_{f} {\min \Big\{1, \,\sum\limits_{\linkl} \funcAssignment{\linkl}{f} \Big\}}
\end{align}
where $|.|$ is the cardinality operator.
}

\tbirkan{While it is NP-hard to enumerate all possible maximum cliques, there is a simpler way does exist to decide whether a graph has a clique of size $k$ or not. The authors in~\cite{modani2008largeMC} suggested an effective method for filtering the nodes that are ineligible for the k-clique. We use their method to find a lower bound for the number of used frequencies.}

\subsection{Lower Bound for Range}
\label{subsec_bound_range}
\tbirkan{In a graph $G$, a Hamiltonian path corresponds to a path which visits each vertex once and only once. Finding the minimum cost Hamiltonian path is referred to as to \emph{Open Traveling Salesman Problem}. Let $H(G)$ denote the weight cost of the shortest Hamiltonian path in $G$. In the literature, $H(G)$ has been shown to be a lower bound for the range of assignments which is denoted by $\rangefa(\setUsedFreq)$ or $\rangefa(\funcAssignment{\linkl}{f}, G)$~\cite{allen1997frequencyAP,smith2000newLB,aardal2003modelsAS}. It can be difficult to calculate $H(G)$ for large constraint graphs, so researchers have considered an alternative bound (based on spanning trees)~\cite{allen1997frequencyAP,aardal2003modelsAS}. The motivation is rooted in each Hamiltonian path being a spanning tree. Let $S(G)$ denote the cost of a minimal weight spanning tree of $G$. So $S(G)$ constitutes a lower bound for $H(G)$ and, fortunately, a greedy algorithm is available for evaluating $S(G)$. If we combine these, we end up with
\begin{align}
 S(G) \leq H(G) \leq \rangefa(\setUsedFreq)
\end{align}
where $\rangefa(\setUsedFreq)$ corresponds to the range of any given frequency assignment. In general, $H(G)$ is a tighter bound, but for large constraint graphs it can be more efficient to use $S(G)$ since it has an effective algorithm that runs for a feasible duration. 
}

For special frequency-separation constraint graphs that are derived from $\freqsepQuantized{i}{j}$, we introduce a new theorem  \tbirkan{to characterize the equality condition of the minimum Hamiltonian path bound.} 

\begin{theorem}
If every triangle in the constraint graph, which is derived from $\freqsepQuantized{i}{j}$, satisfies the triangle inequality, then the  minimum Hamiltonian path cost \tbirkan{is equal to minimum} $\rangefa$.
\end{theorem}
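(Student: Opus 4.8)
The plan is to split the claim into the two inequalities $H(G)\le\rangefa^{*}$ and $\rangefa^{*}\le H(G)$, where $\rangefa^{*}$ denotes the minimum range over all feasible assignments and $G$ is the constraint graph derived from $\freqsepQuantized{i}{j}$. The first direction is already in hand: the paragraph preceding the theorem gives $H(G)\le\rangefa(\setUsedFreq)$ for every feasible assignment, so minimising over assignments yields $H(G)\le\rangefa^{*}$. Hence the whole content of the theorem is the reverse inequality $\rangefa^{*}\le H(G)$, which I would prove constructively by turning a minimum-cost Hamiltonian path into a feasible assignment whose range is exactly $H(G)$.

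For the construction, I would fix a minimum-weight Hamiltonian path $P=(v_1,v_2,\dots,v_n)$ of $G$ (with $n=\numlink$), so that consecutive vertices carry the separation weights $\freqsepQuantized{v_t}{v_{t+1}}$ and $\sum_{t=1}^{n-1}\freqsepQuantized{v_t}{v_{t+1}}=H(G)$. Then I lay the links out along the frequency axis in path order: assign $v_1$ the index $0$ and set $f(v_{t+1})=f(v_t)+\freqsepQuantized{v_t}{v_{t+1}}$, equivalently $f(v_k)=\sum_{t=1}^{k-1}\freqsepQuantized{v_t}{v_{t+1}}$. Since the weights are nonnegative, $f$ is nondecreasing along $P$, so the number of frequency steps spanned by the used indices is $f(v_n)-f(v_1)=H(G)$; as $\rangefa$ is by definition an increasing affine function of that spanned step count, an assignment attaining this span is range-optimal, and the equality of optima transfers from the spanned step count to $\rangefa$.

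The heart of the proof is verifying feasibility, i.e. that $|f(v_i)-f(v_j)|\ge\freqsepQuantized{v_i}{v_j}$ for every pair of vertices $v_i,v_j$. For $i<j$ the left-hand side telescopes to $\sum_{t=i}^{j-1}\freqsepQuantized{v_t}{v_{t+1}}$, so it suffices to show that the total separation weight along the sub-path of $P$ from $v_i$ to $v_j$ dominates the direct weight $\freqsepQuantized{v_i}{v_j}$. I would prove this by induction on $j-i$: the base case $j-i=1$ is an equality, and for the inductive step I peel the first edge off the sub-path and combine the triangle-inequality hypothesis for $\{v_i,v_{i+1},v_j\}$, namely $\freqsepQuantized{v_i}{v_{i+1}}+\freqsepQuantized{v_{i+1}}{v_j}\ge\freqsepQuantized{v_i}{v_j}$, with the induction hypothesis applied to the shorter sub-path from $v_{i+1}$ to $v_j$. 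This gives feasibility, hence $\rangefa^{*}\le H(G)$, and together with the standing lower bound the equality $\rangefa^{*}=H(G)$.

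The step I expect to be the main obstacle is exactly this induction, or rather the assumption it quietly needs: the triangle-inequality hypothesis constrains only triangles \emph{present} in the constraint graph, so the peeling argument requires $v_i,v_{i+1},v_j$ to be mutually adjacent at every stage. This is automatic when the constraint graph is complete, which is the pertinent setting here since $\freqsepQuantized{i}{j}$ is defined for every link pair (a value $0$ meaning ``no separation required''); under completeness $\freqsepQuantized{\cdot}{\cdot}$ is a pseudometric on the links and the sub-path-versus-direct-edge comparison is the standard fact that path lengths dominate distances. I would make this completeness reading explicit, as it is essential: on a triangle-free graph the hypothesis is vacuous yet the conclusion can fail (e.g. a four-cycle three of whose edges have weight $1$ while the fourth has a large weight has minimum range strictly larger than its minimum Hamiltonian-path cost). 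A last minor point to dispatch is that the indices produced by the construction may exceed $\numfreq$; the equality is therefore asserted in the regime — already implicit in speaking of the range objective — in which the available band is wide enough to attain the optimum.
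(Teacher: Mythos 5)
Your proposal is correct and takes essentially the same route as the paper: the paper likewise proves the key direction by laying the links along a minimum-cost Hamiltonian path with cumulative separations $C^{*}(h_i)=f_1+\sum_{k=1}^{i-1}\freqsepQuantized{h_k}{h_{k+1}}$ and checking feasibility by telescoping plus the triangle inequality, while the reverse inequality $H(G_S)\le\rangefa$ -- which you delegate to the bound stated just before the theorem -- is re-derived there by sorting an arbitrary valid assignment. Your explicit polygon-inequality induction and the completeness caveat (with the four-cycle example) merely make precise a step the paper asserts in one line, without changing the argument.
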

\begin{proof}
Suppose we have a constraint graph $G_S=(\setLink, E)$ derived from $\freqsepQuantized{i}{j}$ and every triangle satisfies the triangle inequality. In other words, if $e_1, e_2$, and  $e_3$ form a triangle on the graph, then the sum of any two edges' costs (i.e., $e_1$ and $e_2$; or $e_1$ and $e_3$; or $e_2$ and $e_3$) is greater than the other edge's cost. 

We denote the vertices of $G_S$ as $V=\{1,2,3,\cdots,n_{v}\}$. We define frequency assignment function $C^{*}\!:\!V\!\rightarrow\!{\setFreq}$ as $C(v_i)=f_i$ if and only if the band with central frequency $f_i$ is assigned to the link having the node $v_i$ as the transmitter. We define $H(G_S)$ as the minimum Hamiltonian path cost and $R(C^{*},G_S)$ as the range of assigned frequencies to the graph $G_S$ when the assignment $C^*$ satisfies the separation constraints. As $C^*$ has a finite number of possible assignments (less than $n_{v}^{N_{f}}$), one of them must be the optimal solution having the minimum range. We denote the optimal one as $C^{opt}$. Now we need to show that $R(C^{opt},G_S)=H(G_S)$.

First, we show that $H(G_S)\!\geq\!{R(C^{opt},G_S)}$ holds. Assume that the Hamiltonian path with the shortest cost starts at vertex $h_1$ and is connected through $h_i$ for the $i^{th}$ vertex and ends at the vertex $h_{n_v}$, where $h_{i}$ is the $i^{th}$ element of a re-ordered sequence of $V$. Let $C^{*}(h_1)=f_1$ and $\forall{1}\!<\!i\!\leq\!{n_v}, C^{*}(h_i)=f_{1}+\Sigma_{k=1}^{i-1}{S_{h_{k}h_{k+1}}}$. To show that $C^{*}$ meets the separation constraints, we choose two arbitrary  vertices, $x$ and $y$ from $V$. It is enough to show that $|C^{*}(x)-C^{*}(y)|\geq{S_{xy}}$. Let $i,~j$ be the numbers so that $h_{i}\!=\!x$ and $h_{j}\!=\!y$. Without loss of generality, we say that $i\!<\!j$.
\begin{align}
\begin{split}
|C^{*}(x)-C^{*}(y)|&=|C^{*}(h_j)-C^{*}(h_i)| \\
&=|\Sigma_{k=i+1}^{j}(C^{*}(h_k)-C^{*}(h_{k-1}))| \\
&=|\Sigma_{k=i+1}^{j}S_{h_{k-1}h_{k}}|\geq{S_{h_{j}h_{i}}}=S_{xy}.
\nonumber
\end{split}
\end{align}
Note that the last inequality holds from the triangular inequality condition. Hence, it is proved that $C^{*}$ is a valid assignment, which implies that $R(C^*,G_S)\geq{R(C^{opt},G_S)}$. 
\begin{align}
\begin{split}
\nonumber
R(C^*,G_S)&\!=\!C^{*}(h_{n_v})\!-\!C^{*}(h_{1})\!=\!\Sigma_{k=1}^{n_v-1}S_{h_{k}h_{k+1}}\!=\!H(G_S).\\ 
&\therefore{H(G_S)\geq{R}(C^{opt},G_S)}.
\end{split}
\end{align}

Now we show that $H(G_S)\!\leq\!{R(C^{opt},G_S)}$ holds. Assume that $C^{*}$ is one valid assignment to $G_S$. $\forall1\!\leq\!i\!\leq\!n_{v}$, $C^{*}(i)$s are given satisfying $|C^*(j)\!-\!C^*(k)|\!\geq\!{S_{jk}}$ for arbitrary $j,~k$ from $V$. Let $a_i$ be a sequence of $V$ such where $C^{*}(a_i)$ values are sorted in ascending order. In other words, $C^{*}(a_i)\!\leq\!C^{*}(a_j)$ holds if $i\!<\!j$ holds. Then the range of the assignment becomes $C^{*}(a_{n_v})\!-\!C^{*}(a_1)$.
\begin{align}
\begin{split}
\label{eqn_RgeqH}
R(C^{*},G_S)&=C^{*}(a_{n_v})\!-\!C^{*}(a_1) \\
&=\Sigma_{k=2}^{n_v}C^{*}(a_k)\!-\!C^{*}(a_{k\!-\!1})\geq{\Sigma_{k=2}^{n_v}S_{a_{k}a_{k\!-\!1}}} \\
&=S_{a_{1}a_{2}}\!+\!S_{a_{2}a_{3}}\!+\!\cdots\!+\!S_{a_{n_{v}\!-\!1}a_{n_v}}\!\geq\!{H}(G_S).
\end{split}
\end{align}
The first inequality of \eqref{eqn_RgeqH} comes from the condition that $C^*$ is a valid assignment, and the second inequality comes from the fact that such a summation of $n_{v}-1$ path costs becomes one of the possible Hamiltonian path costs of $G_S$, which should be larger than the minimum one of them, denoted as $H(G_S)$. The inequality \eqref{eqn_RgeqH} indicates that for any valid assignment $C^*$, $H(G_S)\!\leq\!{R(C^{*},G_S)}$ holds. Since $C^{opt}$ is also one of the valid assignments, we can say that $H(G_S)\!\leq\!{R(C^{opt},G_S)}$ holds.

Finally, $H(G_S)\!\leq\!{R(C^{opt},G_S)}\!\leq\!H(G_S)$ is achieved, which leads to the desired equality $H(G_S)\!=\!{R(C^{opt},G_S)}$.

\end{proof}

Hence, we proved that the minimum Hamiltonian path cost \tbirkan{is equal to the minimum} range for the frequency assignment when the constraint graph satisfies with the triangle inequality. \tbirkan{In other words, the minimum Hamiltonian path cost is exactly the minimum range for the given FAP as long as the constraint graph satisfies the triangle inequality.} 


\section{\uppercase{Algorithms}}
\label{sec_algorithms}
Even if all domains have size two, FAP is reduced to a maximum satisfiability problem, which means it is NP-hard~\cite{koster1998partialCS}. A graph-based dynamic programing algorithm is introduced for graphs with bounded tree width~\cite{koster1999optimalSF} that runs in polynomial time, but it is exponential in the width of tree decomposition. Moreover, even with employing several reduction techniques, due to time and memory requirements, it is hard to solve large instances with dynamic programming~\cite{koster1999optimalSF}.

\tbirkan{Our requirement was to approach the problem with perspectives that consider implementation and time complexity. Since the target applications include military and government infrastructures, the approach should also consider ease of integration while still aiming for an effective improvement in terms of resource utilization. Hence we focused on sequential, greedy, and fast algorithms.} First, we consider a greedy algorithm in which the assignment is carried out in a greedy fashion, giving priority to the highest degree node in the undirected constraint graph that is derived from $\freqsepQuantized{i}{j}$ (i.e., $\constraintGraph$). Then we introduce a coloring-based solution and a hybrid algorithm. Moreover, we enhance the algorithms with randomization. Finally, for comparison, we present a genetic algorithm to solve the FAP. \tbirkan{Enhanced algorithms are sequential in nature and faster than common and complicated heuristics with many parameters such as in the cases of simulated annealing and Tabu searches. Note also that our system contains nearly 4000 overlapping frequency bands in our system. Hence the problem is somewhat unique due to each iteration having a large number of neighbor candidates in the GA and Tabu searches. This makes it a necessity to use sequential and simple algorithms. Moreover, we will also present the performances of the proposed algorithms and the Tabu search in a time limited scenario for the CELAR data set. }

\subsection{Highest Degree-based Greedy Algorithm (HEDGE)}
\label{subsec_hedge}
The main idea of the HEDGE algorithm is to give priority to the highest degree nodes in the undirected constraint graph (i.e., $\constraintGraph$) and while assigning the frequency the best result is chosen without knowing the future assignments. 
\begin{algorithm}
\caption{HEDGE Algorithm Pseudocode}
\begin{algorithmic}[1]
\REQUIRE $\freqsepQuantized{i}{j} \;\;\;\; \forall \, i,j$
\STATE Generate $\constraintGraph$ 
\WHILE{There is unassigned link}
	\STATE Choose a node ($\linkl_i$) with the highest degree in $\constraintGraph$
	\IF{AssignFreqTo($\linkl_i$)}
		\STATE Remove the node and update the graph $\constraintGraph$
	\ELSE
		\STATE Rollback neighbor assignments
	\ENDIF
\ENDWHILE
\end{algorithmic}
\end{algorithm}
Here, the AssignFreqTo($\linkl_i$) is a greedy assignment procedure and given in Algorithm~\ref{alg_freqAssignment}. The AssignFreqTo($\linkl_i$) procedure aims to utilize the used frequencies again and again to reduce the number of used frequencies. It also aims to utilize the smallest indexed frequency to keep the range as small as possible. Due to the greedy nature of the algorithm, we cannot be sure whether the solution is optimal or not. However, it runs in a very feasible time. 

\begin{algorithm}
\caption{AssignFreqTo($\linkl_i$) Pseudocode}
\begin{algorithmic}[1]
\REQUIRE $\linkl_i$
\STATE Sort the frequencies according to usage count
\FOR{Each frequency $f$}
\STATE $f\leftarrow$ next minimum of most frequently used frequency 
\IF{Assigning $f$ does not violate $\freqsepQuantized{i}{j}$ for all $j$}
	\STATE Assign $f$ to $\linkl_i$
\ENDIF
\ENDFOR
\IF{Cannot assign a frequency to $\linkl_i$}
	\STATE return false
\ELSE
	\STATE return true
\ENDIF
\end{algorithmic}
\label{alg_freqAssignment}
\end{algorithm}

\subsection{Coloring-based Greedy Algorithm (COG)}
\label{subsec_cog}
Another heuristic we propose is the coloring-based algorithm that aims to assign the least number of frequencies. The COG algorithm has two phases. One is assigning colors to the links without considering the separation constraint and the other is assigning frequencies to the colors and consequently the links while satisfying the separation constraints. COG pseudocode is presented in Algorithm~\ref{alg_cog}.
\begin{algorithm}
\caption{COG Algorithm Pseudocode}
\begin{algorithmic}[1]
\REQUIRE $\freqsepQuantized{i}{j} \;\;\;\; \forall \, i,j$
\STATE Generate $\constraintGraph$ 
\WHILE{There is unassigned link}
	\STATE Choose a node ($\linkl_i$) with the highest degree in $\constraintGraph$
	\STATE Assign minimum of most frequently used color that is different than the neighbors
	\STATE Remove the node and update the graph $\constraintGraph$
\ENDWHILE
\STATE Order the colors
\FOR{color=1:ncolor}
	\STATE Assign frequency band to color according to $\freqsepQuantized{i}{j}$ 
\ENDFOR
\end{algorithmic}
\label{alg_cog}
\end{algorithm}

Since COG does not consider the $\freqsepQuantized{i}{j}$ constraints in the first phase, it is more relaxed than HEDGE. Therefore, the COG algorithm has a solution of which the number of frequencies are less than or equal to those of HEDGE, with a cost of higher $\rangefa$.

\subsection{Hybrid Algorithm}
\label{subsec_hybrid}
HEDGE and COG both have their own pros and cons. Hence, we developed a hybrid algorithm that utilizes both methods. The main idea is to run the COG algorithm up to some predetermined number of assigned links and continue with the HEDGE algorithm on the remaining nodes.
\begin{algorithm}
\caption{Hybrid Algorithm Pseudocode}
\begin{algorithmic}[1]
\REQUIRE $N_{\text{cog}}$
\STATE Run COG algorithm until $N_{\text{cog}}$ assignments are done
\STATE Update $\constraintGraph$
\STATE Run HEDGE algorithm for the remaining nodes
\end{algorithmic}
\label{alg_hybrid}
\end{algorithm}

\subsection{Ordering Enhancements}
\label{subsec_ordering_enhancements}
The ordering of the nodes for assignment determines the solution quality. Since we cannot consider the future assignments in an effective time complexity, we incorporated randomization to the ordering. In order to add randomization to the HEDGE algorithm, we randomly select one of the second-highest degree nodes for the even-indexed assignments and highest degree nodes for the odd-indexed assignments. Adding the randomization resulted in different solutions and the run was replicated many times. Selecting the best solution improved the original HEDGE algorithm. For the COG algorithm, we select the colors in a randomized order for the assignment phase. 

When the ordering enhancement is applied and the algorithm is run many times, we end up with different solution results where the set of assigned frequencies corresponding to the \ith{} solution is denoted by $\solnfa{i}$. Therefore, we need to sort the solutions and choose the best one in terms of sorting criteria. We use a scoring function, $\psi$, for the assignment solutions as follows
\begin{align}
\begin{split}
\label{eq_scoring_function}
\psi(\solnfa{i}) 
	&= \mbox{BF}\, \frac{|\solnfa{i}|-\min\limits_{k}|\solnfa{k}|}{\max\limits_{k}|\solnfa{k}|-\min\limits_{k}|\solnfa{k}|} \\
    &+ (1-\mbox{BF})\, \frac{\rangefa(\solnfa{i})-\min\limits_{k}\rangefa(\solnfa{k})}{\max\limits_{k}\rangefa(\solnfa{k})-\min\limits_{k}\rangefa(\solnfa{k})}
\end{split}
\end{align}
where $|.|$, $\rangefa(.)$ and BF are the cardinality operator, range operator, and the balancing factor for the range and the number of used frequencies. The scoring function in~\eqref{eq_scoring_function} is the sum of normalized resources; hence we seek the smallest score(s).

\subsection{Genetic Algorithm (GA)}
\label{subsec_genetic}
The main idea of the GA is to iteratively keep the fit individuals (i.e., candidate solutions) and create a new generation that results in an  improvement of the solution. In each generation, the fitness of every individual in the population is evaluated. The more fit individuals are stochastically selected from the current population, and each individual's genome is modified (recombined and randomly mutated) to form a new generation. The new generation of candidate solutions is then used in the next iteration of the algorithm. 

Gene representation of a solution is the ordered list of assigned frequencies for the links (e.g., $[1,\, 45,\, 23,\, 16]$ could be a sample assignment for the topology example in Fig.~\ref{fig_system_topology}). Any individual candidate solution can be, in therms of fitness function, good or bad. We define the fitness function in terms of the number of failed links and used frequencies as follows:
\begin{align}
\text{fitness} &= | \setUsedFreq | (\numFailLinks + \text{modifier}) 
\end{align}
where $\numFailLinks$ denotes the number of links below a certain quality threshold. We use a modifier to cope with the $\numFailLinks=0$ case and to balance the weights of the metrics. More fit individuals have smaller values for the fitness function evaluation. For equality cases, we consider the range of the assignment. \tbirkan{Note that implementing GA may be done in different ways (including order preserving genetic operations). Here for a comparison we consider the most straight forward one.}

\begin{figure}[!t]
	\centering
	\includegraphics[width=0.99\columnwidth,keepaspectratio]
	{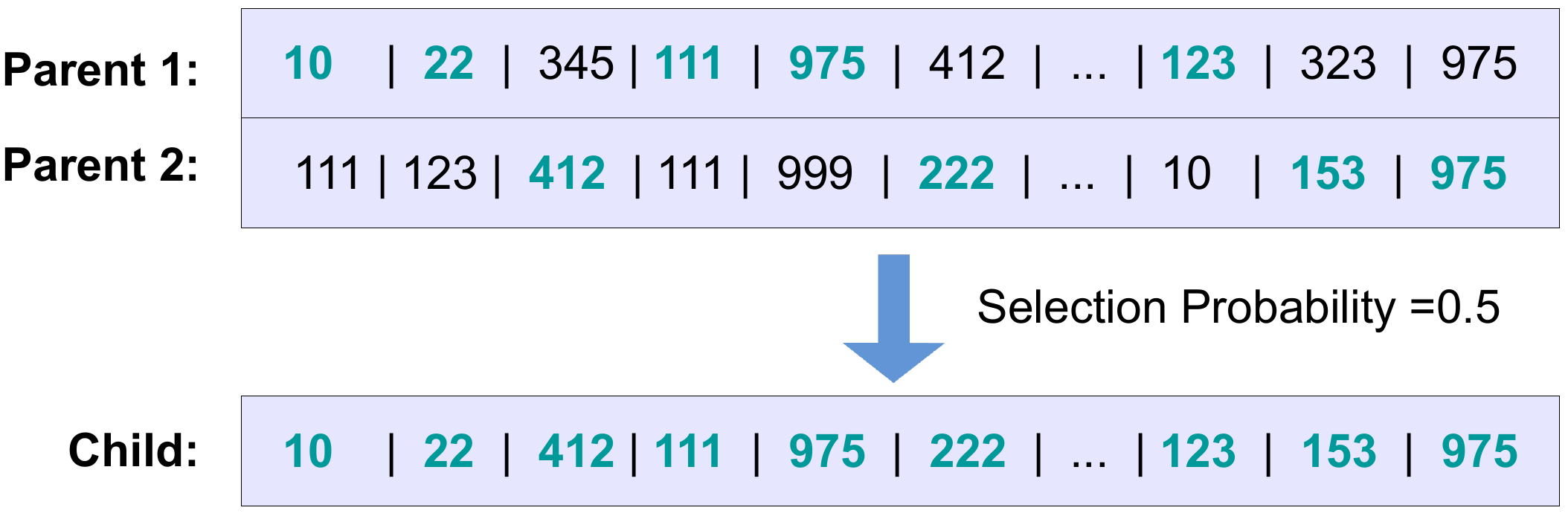}
	\caption{Representative crossover operation.}
	\label{fig_ga_crossover}
\end{figure}
\begin{figure}[t]
	\centering
	\includegraphics[width=0.99\columnwidth,keepaspectratio]
	{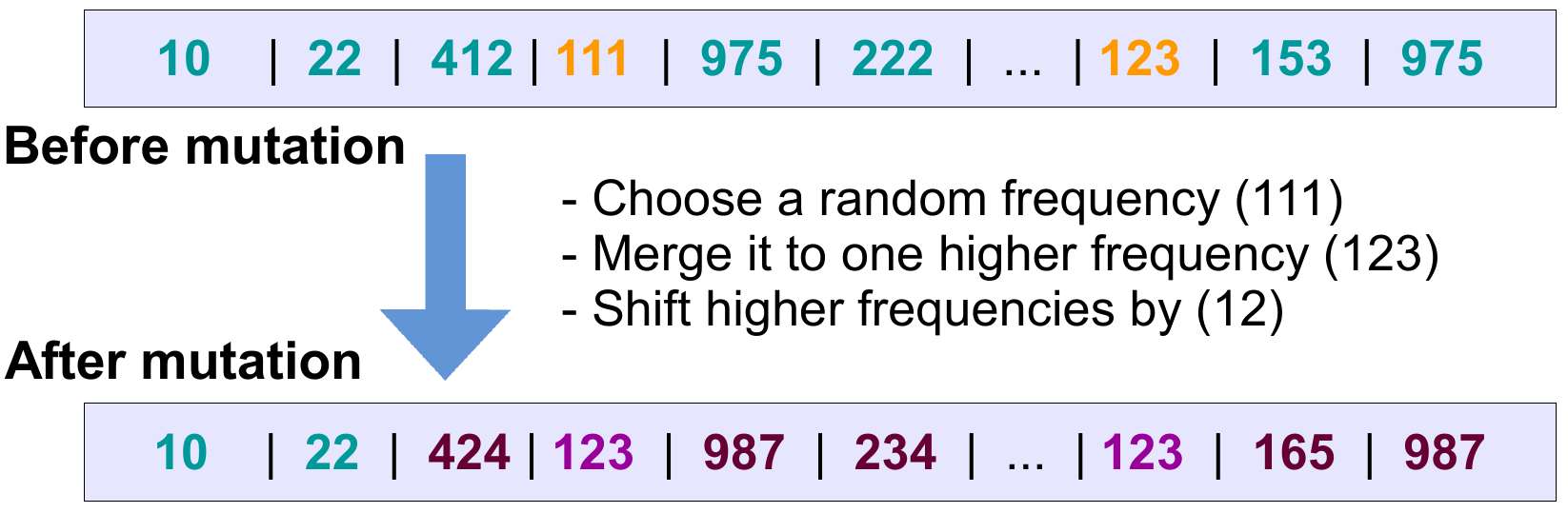}
	\caption{Representative mutation operation.}
	\label{fig_ga_mutation}
\end{figure}
\subsubsection{Initial Population}
Initial population is formed by using the output of the HEDGE algorithm and applying random swap, change, add frequency, and permutation operations. The swap operation switches the assigned frequencies of two random links. The change operation randomly selects a link and changes the assigned frequency to the most and least frequently used frequencies. The add frequency operation randomly selects a link and changes the assigned frequency to a random new frequency that does not violate the constraints. The permutation operation shuffles the assigned frequencies. After creating new individuals by using these operations, GA is iterated.

\subsubsection{Genetic Operations}
The iteration of GA mainly consists of crossover and mutation operations followed by fitness evaluation. During execution of crossover, the best individuals are randomly matched and the genes are combined with equal probability. A representative crossover operations is presented in Fig.~\ref{fig_ga_crossover}, in which a new individual is created by combining the genes of the two parents.

After the crossover operation, randomly selected children of the new generation are exposed to mutation operation, which merge a randomly selected frequency to the next higher frequency used. A representative mutation operation is depicted in Fig.~\ref{fig_ga_mutation}. The mutation operation reduces the number of assigned frequencies at the cost of increased range.

\begin{figure*}[!t]
	\begin{center}
		\subfigure[Top view]
		{\includegraphics[width=0.523\textwidth,keepaspectratio]{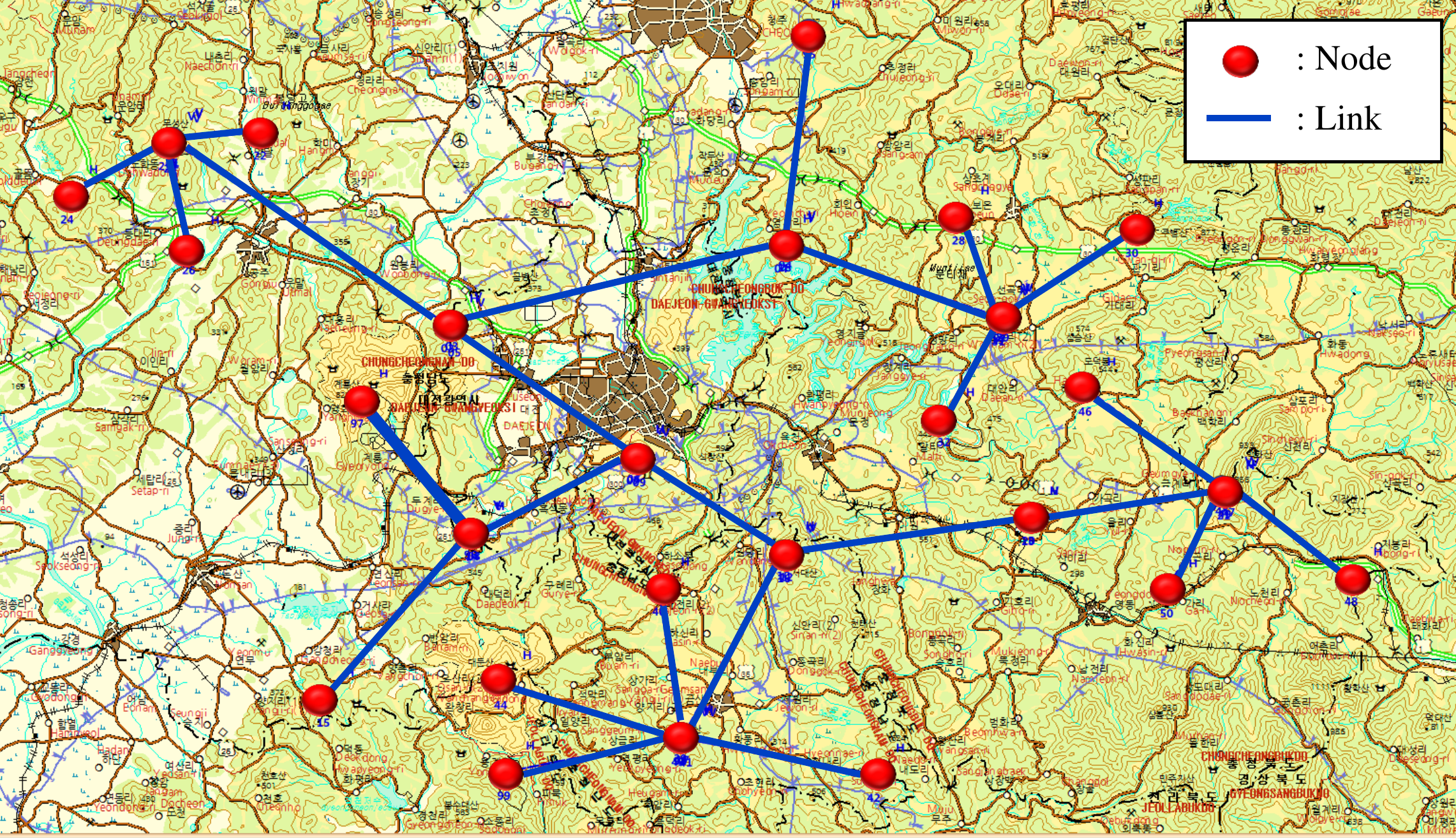}
			\label{fig_top_view} } 
		\subfigure[Front view]
		{\includegraphics[width=0.45\textwidth,keepaspectratio]{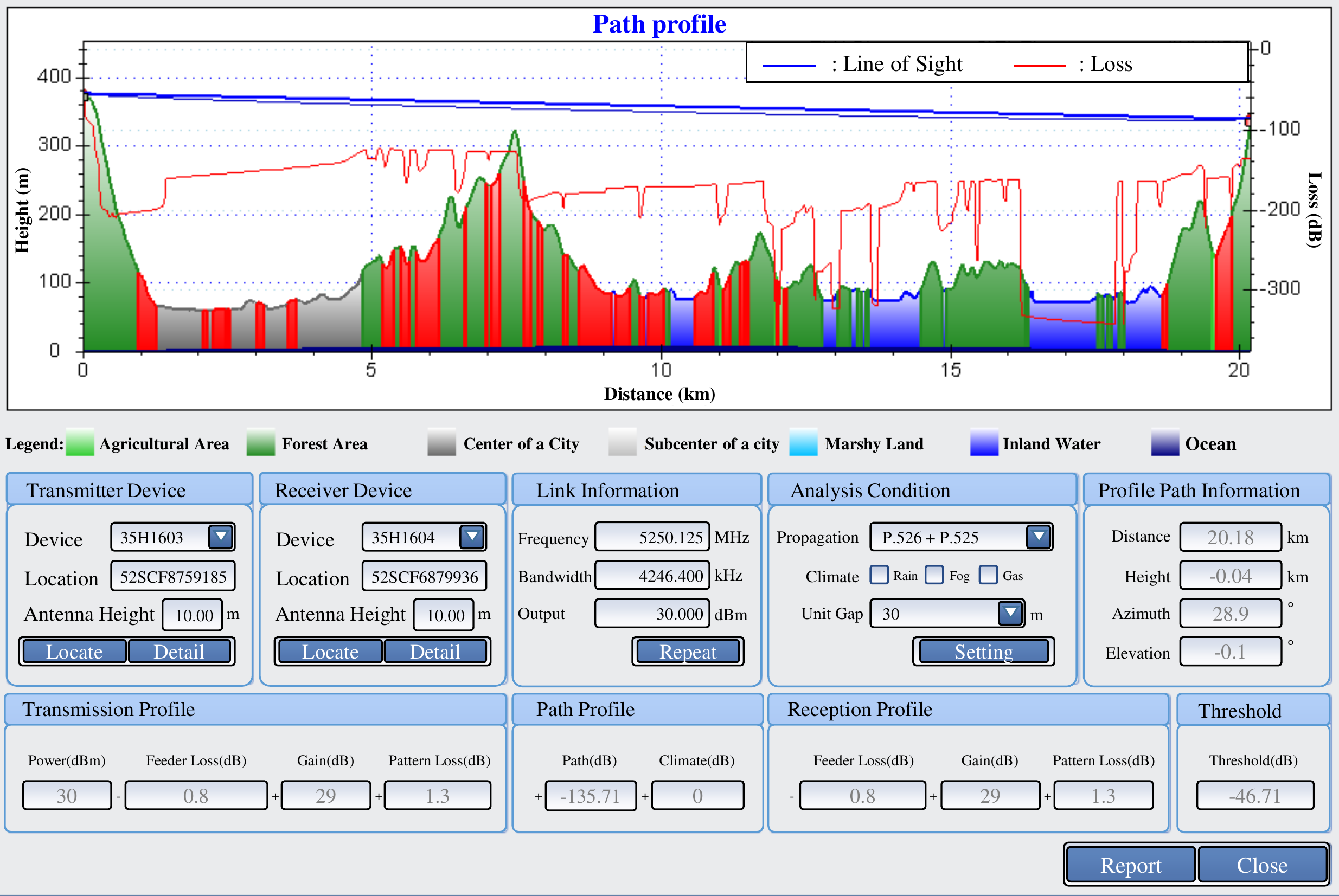}
			\label{fig_front_view} }
	\end{center}
	\caption{An example of topology for analysis with the measured data, covering Daejeon region in South Korea with 50 communication nodes. The topology is considering 3-D environments based on the real geographical features.}
	\label{fig_simulator}
\end{figure*}
\section{\uppercase{Implementation \& Measurement Data}}
\label{sec_implementation}

We measured the interference data between the nodes on actual map, with the real world 3-D ray tracing software. Figure~\ref{fig_simulator} shows 50 communication links that are distributed on the real map of Daejeon, South Korea. The location of nodes are arbitrarily decided due to classified issues, but any two nodes that are communicating with each other are considered to be in a line of sight.  Figure~\ref{fig_top_view} represents the exact distribution of the nodes, while Fig.~\ref{fig_front_view} shows the path environment between the communication nodes, which includes the 3D features of the region.

The software was developed using the programming language C\# based on Microsoft Windows Network. The program has its required input data as the type of antenna polarizations, transmit power, antenna pattern and attenuation values for transmit and receive mask. It also treats the climate and 3D geographical data so as to achieve a more accurate propagation model. It calculates interference between the nodes and power of signal transmission. The calculated values are applied to obtain the requirements for frequency separation. Frequencies are assigned to each node according to the proposed algorithms via MATLAB, satisfying constraints from the measured data to set interference zero between any of the two communication nodes.

We have run our proposed algorithms on several topologies, including 20, 50, and 200-node cases. In this paper we present the figure of 50-node topology and the results for a 50-node and 200-node frequency assignments with the HEDGE, Enhanced HEDGE, Hybrid and Enhanced Hybrid algorithms.

\begin{figure*}[!t]
\begin{center}
\subfigure[$\numlink = 150$]
{\includegraphics[width=0.49\textwidth,keepaspectratio]{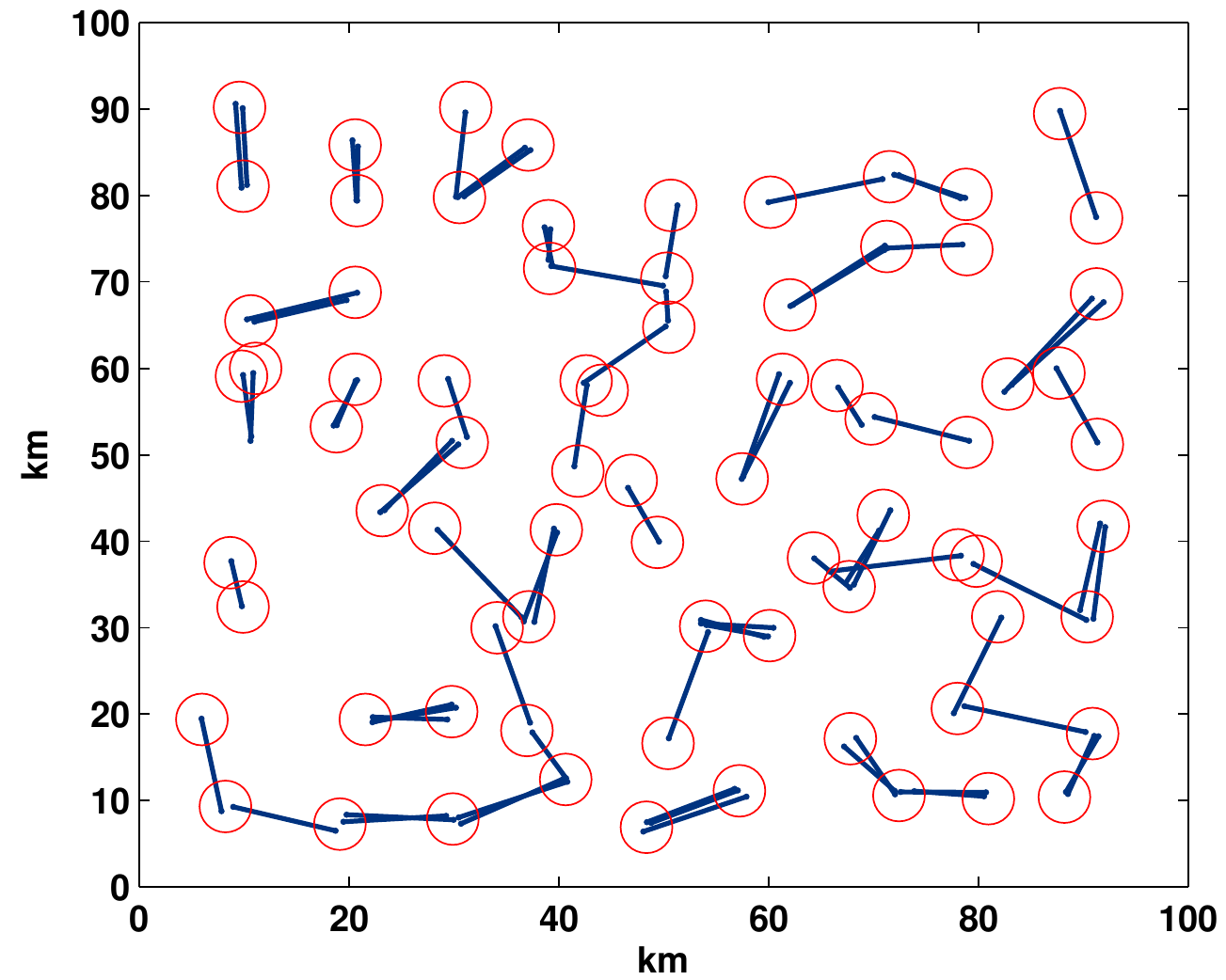}
\label{fig_stopology150} } 
\subfigure[$\numlink = 200$]
{\includegraphics[width=0.49\textwidth,keepaspectratio]{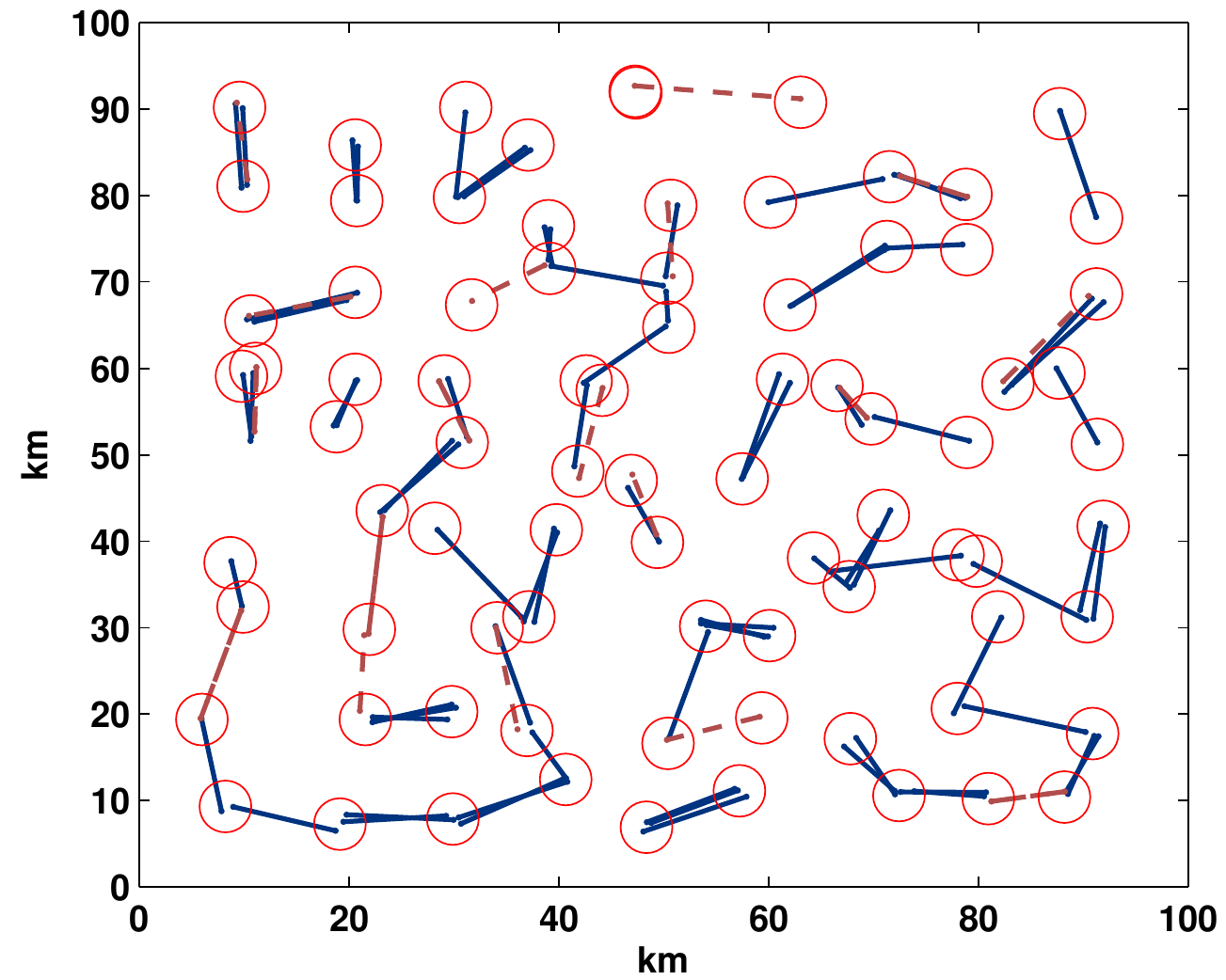}
\label{fig_stopology200} }
\end{center}
\caption{Topology for analysis with synthetic data in a 100 km $\times$ 100 km region. Each shown link has two-way communication links (i.e., each link in the figure corresponds to two links in the system model). For $\numlink=200$ case, there are additionally 50 links, represented with different colors and line style (dashed). Red circles indicate the clusters of communication node locations.}
\label{fig_stopology}
\end{figure*}
\section{\uppercase{Performance Analysis}}
\label{sec_performance_analysis}
In this section, we conduct analyses from two different perspectives: one is via synthetic data and the other is via measurement data. For the analysis via measurement data, we integrated our solution with a software program that utilizes 3-D ray tracing. \tbirkan{Hence the easy integration was a requirement for the developed algorithms}. In both scenarios, we give topology details and we analyze the performance of the proposed algorithms.

\subsection{Analysis with Synthetic Data}
In this section, we analyze the performance of the proposed algorithms and the effect of the balancing factor on the enhanced HEDGE and Hybrid algorithm. We \tbirkan{mainly consider $\numlink=150$} then focus on the scenarios with different numbers of links.

\subsubsection{Topology \& Parameters}
\label{subsec_topology_and_parameters}
Performance of the algorithms are evaluated within an area of 100 km $\times$ 100 km and the maximum link length is 20 km. We consider topologies with 100-200 links within the same physical area, shown in Fig.~\ref{fig_stopology} for $\numlink=150$ and $\numlink=200$. The topology consists of clusters 1 km in size and each cluster includes a number of communication nodes. Other system parameters are listed in Table~\ref{tab_system_params}.

\begin{table}[h]
	\caption{Range of parameters used in the analysis}
	\label{tab_system_params}
	\centering
	\begin{tabular}{L{3.2cm} L{2.2cm} L{2.2cm}}
	\hline
		Parameter &  Variable & Values \\
	\hline
		Link count  		& $\numlink$  	& $100\sim 200$ \\
		Receiver sensitivity& $\rxsens$  	& $-79.12$ dBm \\
		Transmit power		& $\ptx$  		& $1$ W \\
		Range constraint	& $\rangeConstraint$  	& $600$ MHz \\
		Starting frequency	& $f_{\text{start}}$  	& $7007.5$ MHz \\
		Unit frequency band & $\freqshift$  & $150$ KHz \\
		Bandwidth 			& $B$  			& $15$ MHz \\
	\hline
	\end{tabular}
\end{table}

\subfigcapmargin = .15cm
\begin{figure*}[!t]
\begin{center}
\subfigure[Enhanced HEDGE, raw HEDGE, and Genetic algorithms' solution space in terms of range and the number of assigned frequencies. The square and the  circle markers correspond to the solution of the raw HEDGE and the genetic algorithms, respectively. The small points correspond to the solutions obtained by randomization of the HEDGE algorithm.]
{\includegraphics[width=0.49\textwidth,keepaspectratio]{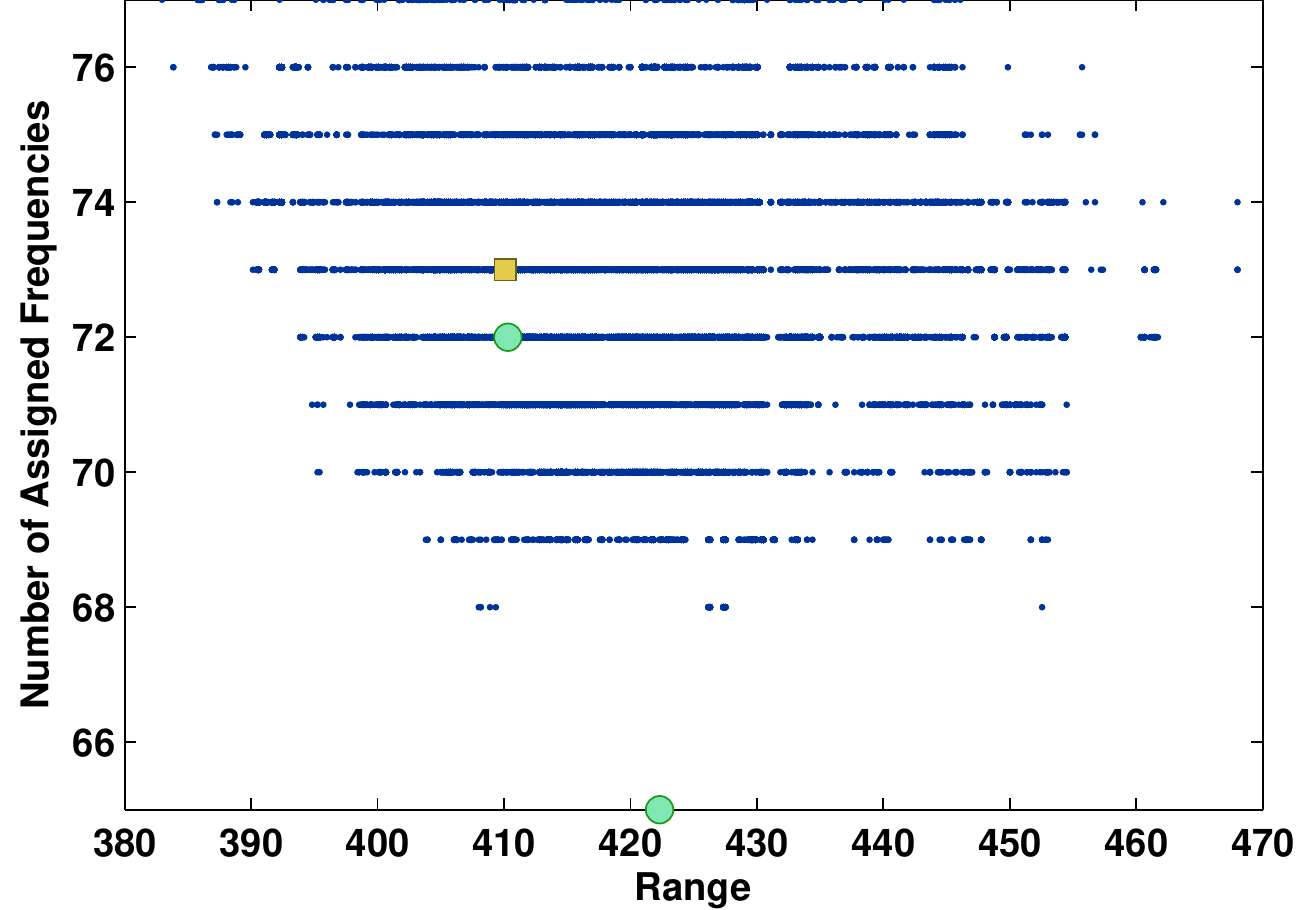}
\label{fig_hedge_enhanced_solution_space} } 
\subfigure[Enhanced and raw Hybrid algorithms' solution space in terms of range and the number of assigned frequencies. The triangle markers correspond to the solutions of the raw Hybrid algorithm with different $N_{\text{cog}}$ values (namely 140, 130, and 80). The small points correspond to the solutions acquired by randomization of the Hybrid algorithm.]
{\includegraphics[width=0.49\textwidth,keepaspectratio]{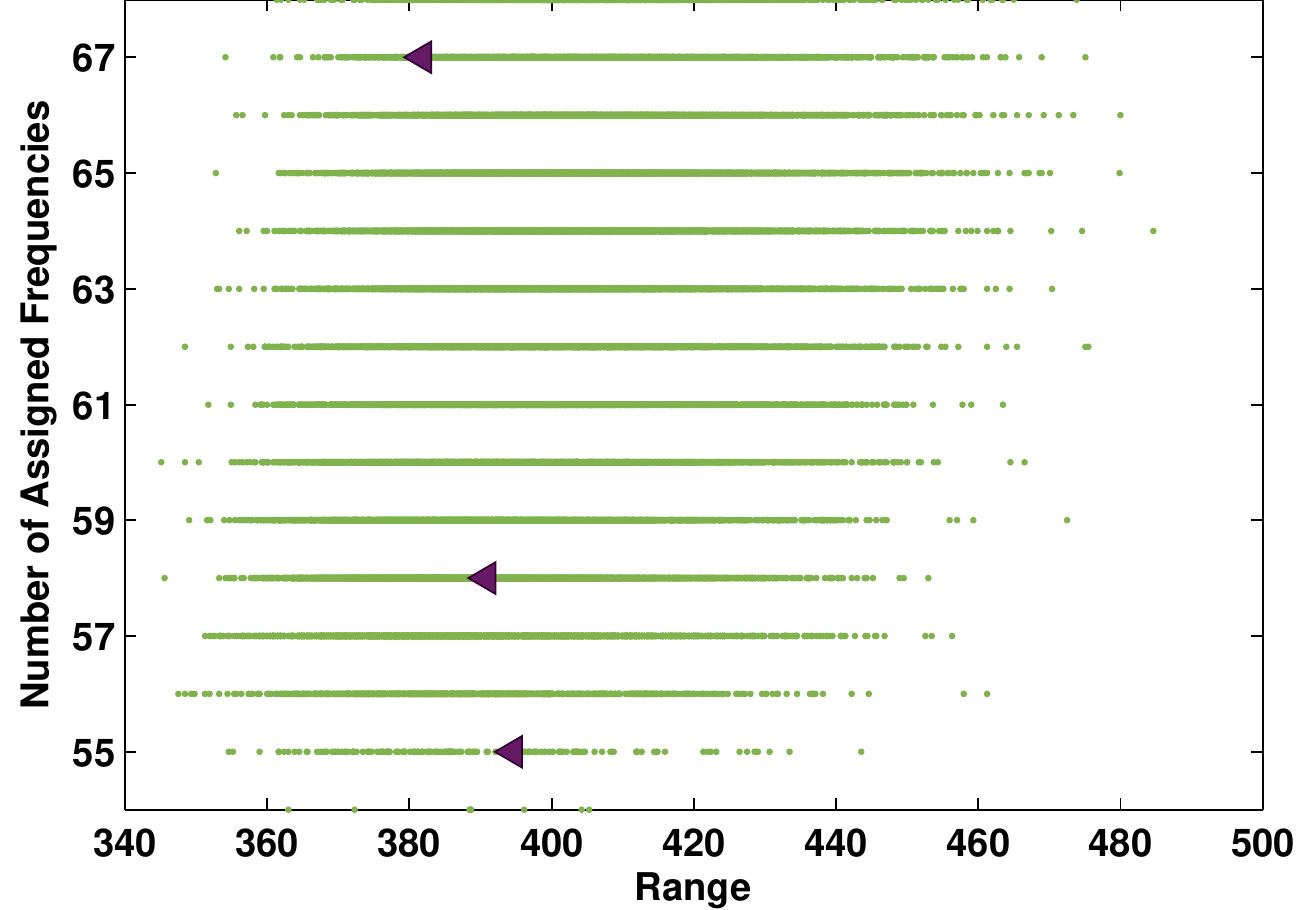}
\label{fig_hybrid_enhanced_solution_space} }
\end{center}
\caption{Solution spaces of the algorithms ($\numlink = 150$) where the lower bounds for the number of frequencies and the range are 52 and $\SI{302.2}{\mega\hertz}$.}
\label{fig_solution_spaces}
\end{figure*}
\subfigcapmargin = .0cm

\subsubsection{Analysis on Ordering Enhancement}
\label{subsec_effect_of_bf}
HEDGE is the base-line algorithm and we enhance it by randomizing the ordering. We present the genetic algorithm, the raw HEDGE algorithm, and the enhanced version results in Fig.~\ref{fig_hedge_enhanced_solution_space}. The GA improves the solution of HEDGE in terms of the number of assigned frequencies at the cost of losing the higher range of used frequencies. The GA does not improve on the range axis. The enhanced HEDGE has many different solution results and we give all possible solutions as a point on the performance metric space, namely range versus number of assigned frequencies. The left-most points correspond to the best results for a given number of assigned frequencies. With the enhanced HEDGE, we observe improvement in terms of both metrics \tbirkan{compared to raw HEDGE}.

When we order the solutions according to the number of used frequencies and range with the given order, we end up with the list of solutions in Table~\ref{tab_hedge_plus_results}. Performance metrics for GA, enhanced HEDGE, raw HEDGE algorithms\tbirkan{, and the lower bounds} are given in Table~\ref{tab_hedge_plus_results}. We want a solution with smaller $|\solnfa{i}|$ and $\rangefa(\solnfa{i})$. GA offers a better solutions in terms of the number of assigned frequencies at the cost of a higher range. On the other hand, the enhanced HEDGE has a smaller range for the same number of frequencies. Moreover, the best solution in terms of $|\solnfa{i}|$ has a smaller range compared to raw HEDGE algorithm. Hence the enhancement is significant and we can claim that the randomization enhancement (enhanced HEDGE) is more significant compared to GA enhancement. 
\begin{table}[t]
	\caption{Effect of the ordering enhancement on the HEDGE algorithm.}
	\label{tab_hedge_plus_results}
	\begin{center}
	\begin{tabular}{L{4.1cm} C{1.9cm} C{2.1cm}}
	\hline
		Solution &  $|\solnfa{i}|$ ${\mbox{(LB}=52\mbox{)}}$ & $\rangefa(\solnfa{i})$ (MHz) ${\mbox{(LB}=302.2\mbox{)}}$\\
	\hline
		Enhanced HEDGE S1  	 				&  68   & 408.0 \\
		Enhanced HEDGE S2  	 				&  69   & 403.8 \\
		Enhanced HEDGE S3  	 				&  70   & 395.3 \\
		Enhanced HEDGE S4  	 				&  71   & 394.8 \\
		Enhanced HEDGE S5  	 				&  72   & 393.9 \\
		Enhanced HEDGE S6  	 				&  73   & 390.1 \\
		Enhanced HEDGE S7  	 				&  74   & \textbf{387.3} \\
		\hline
		Genetic Algorithm S1 					&  72   & 410.3 \\
		Genetic Algorithm S2 					&  \textbf{65}   & 422.3 \\
		\hline
		HEDGE (no enhancement)  &  73   & 410.1 \\
	\hline
	\end{tabular}
	\end{center}
	\hfill\footnotesize{LB: Lower Bound}
\end{table}

Before starting to analyze the ordering enhancement on the Hybrid algorithm, we focus on the $N_{\text{cog}}$ (which determines the number of links to be assigned by the COG algorithm). Increasing $N_{\text{cog}}$ results in less frequency usage with higher $\rangefa(\solnfa{i})$. We can safely select an $N_{\text{cog}}$  between 80 and 130 for the rest of the analysis while using the order enhancement. 
\begin{table}[ht]
	\caption{Effect of $N_{\text{cog}}$ on the Hybrid algorithm.}
	\label{tab_hybrid_results}
	\centering
	\begin{tabular}{L{2.7cm} C{1.9cm} C{2.1cm}}
	\hline
		$N_{\text{cog}}$  &  $|\solnfa{i}|$ ${\mbox{(LB}=52\mbox{)}}$& $\rangefa(\solnfa{i})$ (MHz) ${\mbox{(LB}=302.2\mbox{)}}$\\
	\hline
		$140$		&  \textbf{55}   & 394.6 \\
		$130$		&  58   & 390.8 \\
		$80$		&  67   & \textbf{381.8} \\
		$50$		&  71   & 397.2 \\
		$30$		&  73   & 403.9 \\
	\hline
	\end{tabular}
\end{table}

With appropriate choices of $N_{\text{cog}}$, Hybrid algorithm solutions are better than raw HEDGE and enhanced HEDGE algorithms in terms of range and the number of used frequencies. Even without utilizing the randomization enhancement, the Hybrid algorithm has promising solutions, as can be seen in Table~\ref{tab_hybrid_results}.

When we consider randomization enhancement, we end up with different solution results. In Fig.~\ref{fig_hybrid_enhanced_solution_space}, each point corresponds to a different solution, represented by its performance metrics --the range and number of frequencies used. Three solutions obtained from the raw Hybrid algorithm with $N_{\text{cog}}$ values 80, 130, and 140 are also presented in Fig.~\ref{fig_hybrid_enhanced_solution_space} with a triangle marker. The enhanced version has many solution results and for each value in the y-axis, the left-most point corresponds to the best result for that specific row. With randomized ordering enhancement, we observe a significant improvement on the Hybrid algorithm.  Table~\ref{tab_hybrid_plus_results} presents the $|\solnfa{i}|$ and $\rangefa(\solnfa{i})$ metrics for representative solutions of the enhanced Hybrid and the raw Hybrid algorithms. \tbirkan{Note that the performance metrics of the enhanced Hybrid algorithm are close to lower bound values.}
\begin{figure}[t]
	\centering
	\includegraphics[width=0.99\columnwidth,keepaspectratio]
	{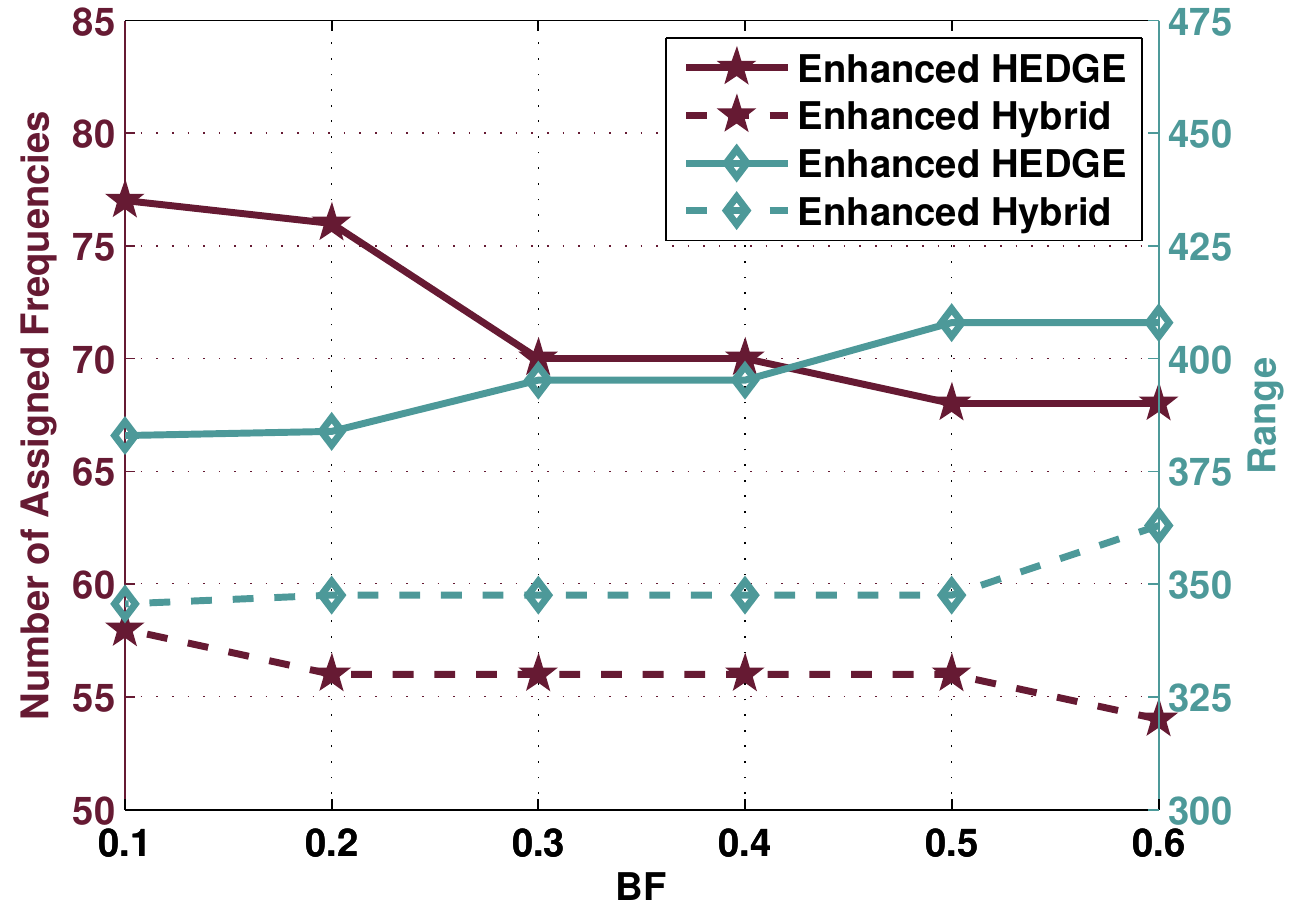}
	\caption{Effect of balancing factor on the enhanced HEDGE and Hybrid algorithms where the lower bounds for the number of frequencies and the range are 52 and $\SI{302.2}{\mega\hertz}$. The left axis and the curves with star markers show the effect on the number of assigned frequencies. The right axis and the curves with diamond marker show the effect on the range. Solid curves correspond to the enhanced HEDGE algorithm while dashed curves correspond to the enhanced Hybrid algorithm.}
	\label{fig_effectBF}
\end{figure}
\begin{table}[ht]
	\caption{Effect of the ordering enhancement on the Hybrid algorithm.}
	\label{tab_hybrid_plus_results}
	\centering
	\begin{tabular}{L{3.4cm} C{1.9cm} C{2.1cm}}
	\hline
		Solution &  $|\solnfa{i}|$ ${\mbox{(LB}=52\mbox{)}}$ & $\rangefa(\solnfa{i})$ (MHz) ${\mbox{(LB}=302.2\mbox{)}}$\\
	\hline
		Enhanced Hybrid S1  	 			&  \textbf{54}   & 363.0 \\
		Enhanced Hybrid S2  	 			&  55   & 354.6 \\
		Enhanced Hybrid S3  	 			&  56   & 347.6 \\
		Enhanced Hybrid S4  	 			&  58   & \textbf{345.6} \\
	\hline
		Hybrid ($N_{\text{cog}}=140$)  &  55   & 394.6 \\
		Hybrid ($N_{\text{cog}}=130$)  &  58   & 390.8 \\
		Hybrid ($N_{\text{cog}}=80$)   &  67   & 381.8 \\	
	\hline
	\end{tabular}
\end{table}

Figure~\ref{fig_effectBF} shows the effect of the balancing factor on the enhanced HEDGE and Hybrid algorithms. The enhanced HEDGE and Hybrid algorithms have many solution results and ranking them according to \eqref{eq_scoring_function} gives us the best solution with the given parameters. Increasing the balancing factor parameter means increasing the weight of the number of assigned frequencies. When we consider the solid and dashed curves as two different groups, we can see that the dashed curves (i.e., curves of the enhanced Hybrid algorithm) offer a better solution with a  smaller range and fewer number of assigned frequencies. When we focus on the curves with the same markers, we observe the affect of the balancing factor on the corresponding metric. Increasing the balancing factor results in a lesser number of assigned frequencies at the cost of the increased range.

\subsubsection{Analysis on Effect of $\numlink$}
\label{subsec_effect_of_nl}
In Figs.~\ref{fig_effectNL_onAF} and~\ref{fig_effectNL_onRange}, we analyze the effect of the number of links on system performance metrics. The number of assigned frequencies is presented in Fig.~\ref{fig_effectNL_onAF} and the range of the assignment is presented in Fig.~\ref{fig_effectNL_onRange}. In both of the figures, we selected the balancing factor as 0.4.

In Fig.~\ref{fig_effectNL_onAF}, the effect of $\numlink$ on the number of assigned frequencies is shown for the HEDGE, enhanced HEDGE, and enhanced Hybrid algorithms. The HEDGE algorithm, as a baseline greedy algorithm, has the highest number of assigned frequencies. Hence, it requires more resources than other algorithms. The enhanced HEDGE algorithm is slightly better than the HEDGE algorithm, while the enhanced Hybrid algorithm has a significant advantage over both of them. \tbirkan{Moreover, the enhanced Hybrid algorithm finds, in a time-efficient manner, solutions close to the lower bound values.} With the given parameters, the enhanced Hybrid algorithm requires 15 fewer frequency bands on average than the HEDGE algorithm. Another observation is that the gain is increasing as we increase the $\numlink$ (e.g., for 100 links, the gain, in terms of number of required frequency bands, is nine while the improvement for 200 links is 22).
\begin{figure}[t]
	\centering
	\includegraphics[width=0.99\columnwidth,keepaspectratio]
	{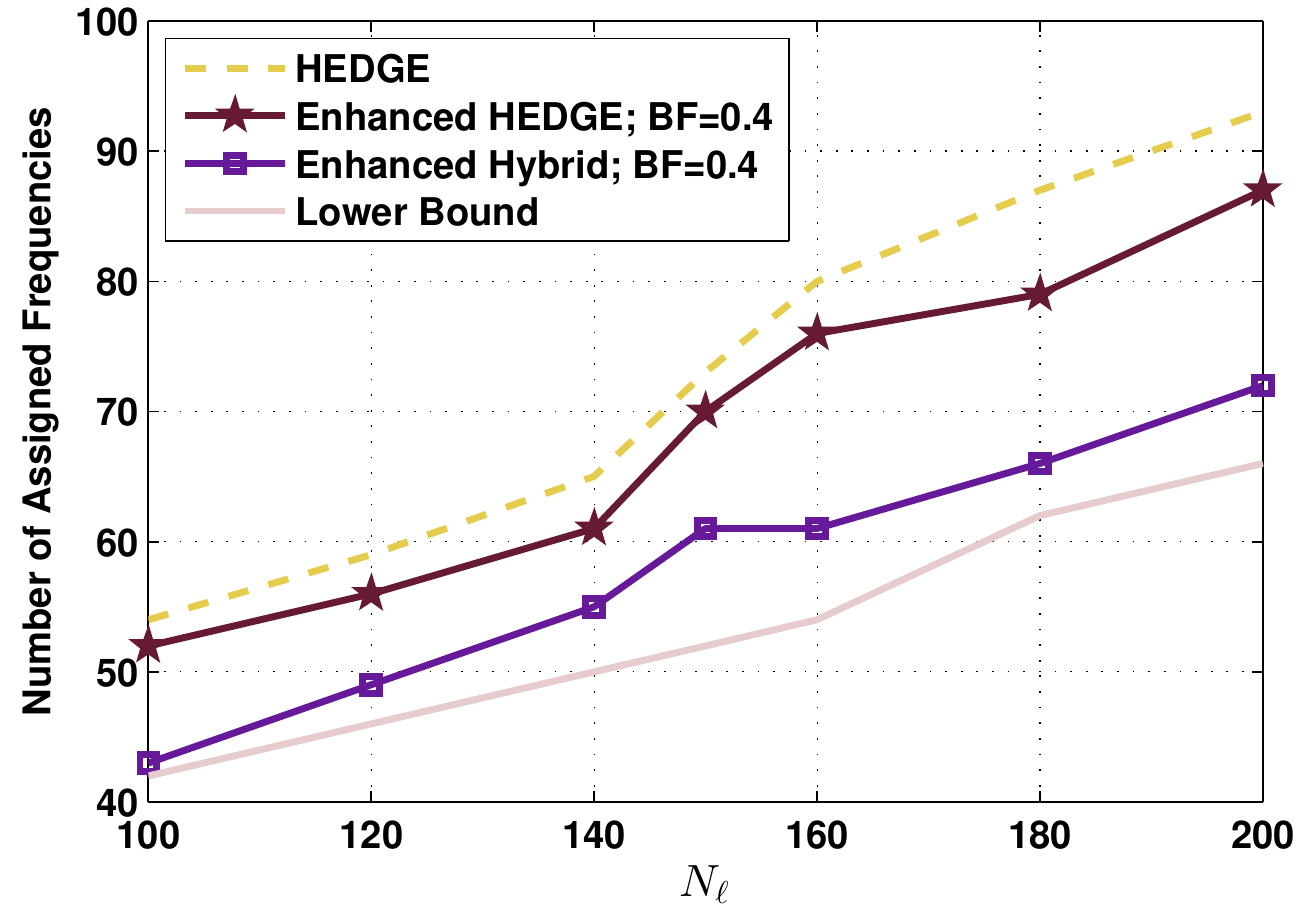}
	\caption{Effect of $\numlink$ on the number of assigned frequencies.}
	\label{fig_effectNL_onAF}
\end{figure}

In Fig.~\ref{fig_effectNL_onRange}, the effect of $\numlink$ on the range of assigned frequencies is presented for the HEDGE, enhanced HEDGE, and enhanced Hybrid algorithms. Having a smaller range is better for re-utilizing and managing the resources at close locations. The HEDGE algorithm requires larger range values than do the enhanced HEDGE and Hybrid algorithms. The enhanced HEDGE algorithm is slightly better than the HEDGE algorithm, while a significant enhancement is achieved by the enhanced Hybrid algorithm. Also we see that the improvement multiplies when we increase $\numlink$. \tbirkan{Moreover, the enhanced Hybrid algorithm is close enough to the lower bound values with a reasonable run time that it will be considered after analyzing the measured data}.
\begin{figure}[t]
	\centering
	\includegraphics[width=0.99\columnwidth,keepaspectratio]
	{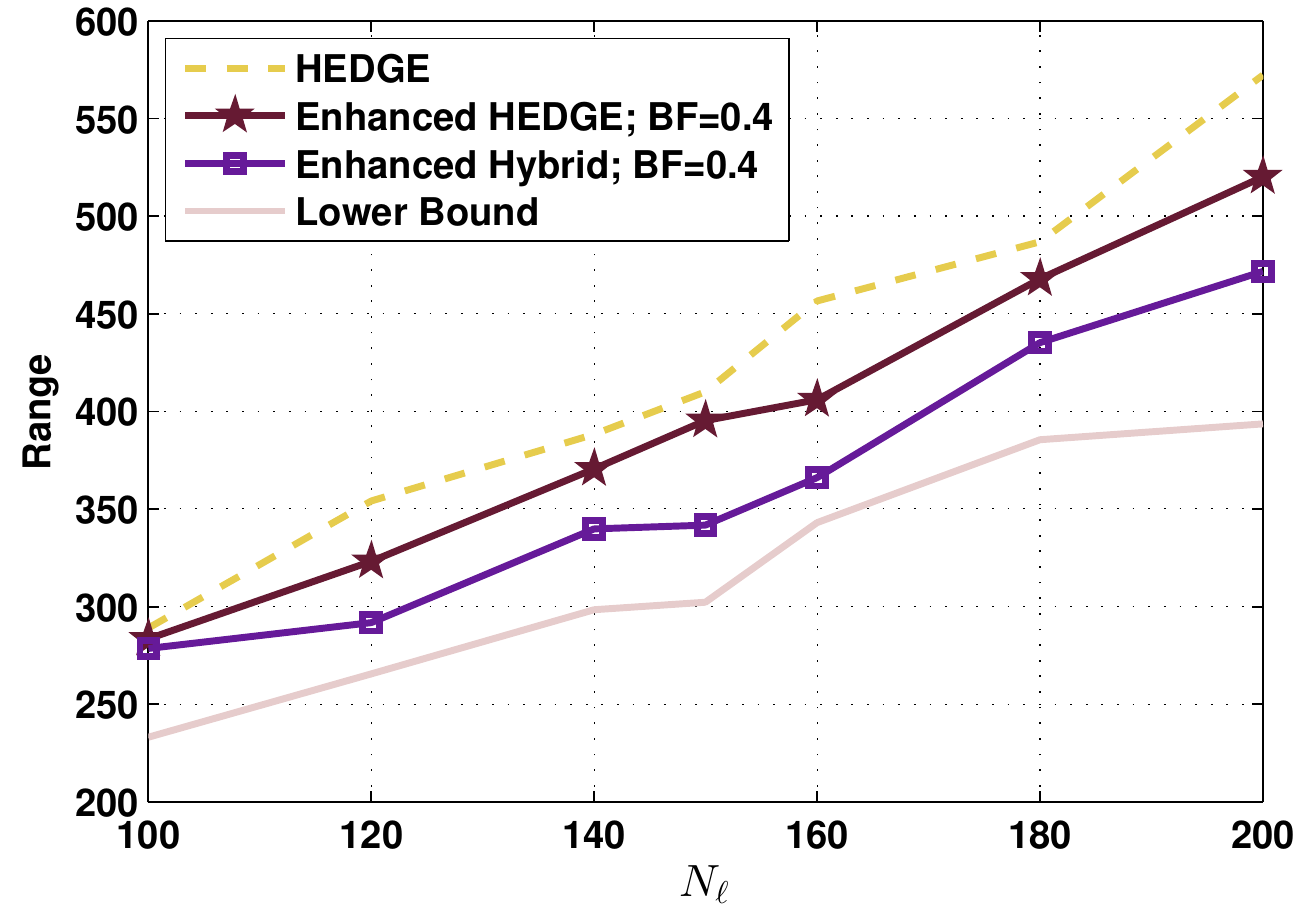}
	\caption{Effect of $\numlink$ on the range of assigned frequencies.}
	\label{fig_effectNL_onRange}
\end{figure}

\subsection{Analysis with Measurement Data}

The topology of the nodes is shown in Fig.~\ref{fig_simulator}, which is a region in South Korea's Daejeon. The links are placed in a more separated manner to reduce interference constraints. The frequency assignment solutions for the measured data for $\numlink=200$ and $\numlink=50$ are given in Table~\ref{tab_results_of_measured_data}. 

The HEDGE algorithm performs poorly compared to the proposed algorithms. \tbirkan{For $\numlink=200$ and $\numlink=50$, the lower bounds for the required number of frequencies are 5 and 4, respectively.} For both $\numlink=200$ and $\numlink=50$ cases, the enhanced HEDGE algorithm requires fewer frequencies and a smaller range \tbirkan{compared to raw HEDGE} (gain is at least  20\%). The enhanced Hybrid algorithm achieves more than 50\% enhancement in terms of the number of assigned frequencies at the cost of 25\% range increments for $\numlink=200$. In the $\numlink=50$ case, the enhanced Hybrid algorithm achieves nearly 40\% and 30\% improvement in terms of the number of assigned frequencies and the range, respectively. Hence, these results validate our proposed algorithms also with the measured data and the realistic antenna features. \tbirkan{Also note that, for the measured data the proposed algorithms achieve very close to the lower bound and specifically the same for $the \numlink=50$ case.} 

\begin{table}[t]
	\caption{Effect of the ordering enhancement on the HEDGE and the Hybrid algorithm with the measured data.}
	\label{tab_results_of_measured_data}
	\centering
	\begin{tabular}{C{0.4cm} L{3.3cm} C{1.1cm} C{2.1cm}}
	\hline
		& Solution &  $|\solnfa{i}|$ & $\rangefa(\solnfa{i})$ (MHz) \\
	\hline
		\parbox[t]{0.1cm}{\multirow{3}{*}{\rotatebox[origin=c]{90}{$\numlink=200\;\;$ }}}& HEDGE   &  14   & 201.6 \\
		 & Enhanced HEDGE S1 			  &  10  & \textbf{160.8} \\
		 & Enhanced HEDGE S2 			  &  9  & 162.0 \\
		 & Enhanced Hybrid S1 			  &  \textbf{6}  & 259.1 \\
	\hline
		\parbox[t]{0pt}{\multirow{3}{*}{\rotatebox[origin=c]{90}{$\numlink=50\;\;$ }}}& HEDGE   &  7   & 196.2 \\
		& Enhanced HEDGE S1 			&  6  & 141.6 \\
		& Enhanced HEDGE S2 			&  5  & 142.5 \\ 
		& Enhanced Hybrid S1 		&  \textbf{4}   & \textbf{133.9} \\
	\hline
	\end{tabular}
\end{table}

\subsection{\tbirkan{Time Analysis}}
\tbirkan{We also analyzed the runtime costs of the proposed algorithms and a straight forward tabu search algorithm. For the analysis we used modified CELAR data and specifically the scen02 data is used. There are 200 links and 1235 constraints for the links. The minimum number of required frequencies that is found in the literature is 14 while the lower bound is 13~\cite{tiourine1995overviewOA,kolen1994constraintSA,bouju1995intelligentSF}.}

\tbirkan{Figure~\ref{fig_time_cost} displays the time-limited run results. Under a limited time assumption, due to the random nature of the algorithms, we replicated the runs and at each run we ended up with different solutions-- sometimes with bad ones. Curves correspond to the mean value of the number of used frequencies with showing the maximum and the minimum values as bars around the mean. When we increase the time limit is increased, the achieved solutions tend to improve. However, prior to 500 seconds, a Tabu search generally leads to no feasible solutions. Since our aim here is to develop an effective algorithm for easy integration into existing software and achieving reasonable results in a short duration, we have focused on sequential greedy algorithms. This analysis justifies our arguments. }
\begin{figure}[t]
	\centering
	\includegraphics[width=0.99\columnwidth,keepaspectratio]
	{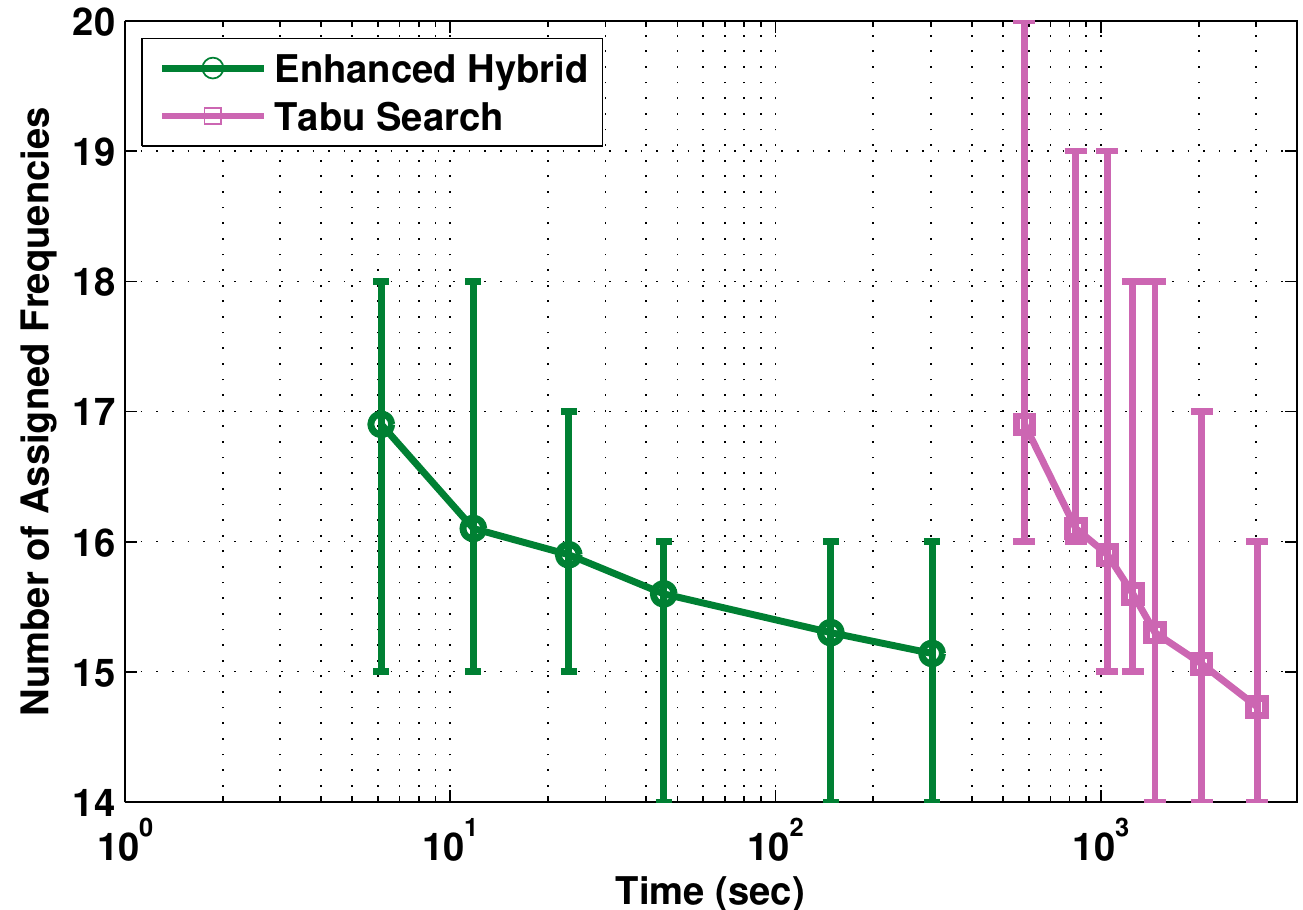}
	\caption{Performance in terms of number of assigned frequencies for a time limited scenario. Each method is replicated 20 times with different time limitations and the performance metric is averaged over run results.}
	\label{fig_time_cost}
\end{figure}

\vspace{10pt}
\section{\uppercase{Conclusion}}
\label{sec:concl}
In this paper, we proposed frequency resource allocation algorithms by using \tbirkan{simple} randomization \tbirkan{step} to enhance greedy heuristics. The proposed algorithms were compared with greedy heuristics and genetic algorithm in terms of the number of assigned frequencies and the range of the assignment. Our analysis was carried out using synthetic data and measured data. First, we observed that the conventional genetic algorithm's enhancement is not significant in terms of range. The enhanced HEDGE algorithm significantly improved the solution by reducing the number of assigned frequencies by 7\% for nearly the same range with $\numlink=150$. If we compare an equal number of assigned frequencies, we confirm a significant improvement in terms of the range. Similarly, the enhanced Hybrid algorithm significantly reduced the number of assigned frequencies and the improvement in terms of the range of assignment was 11-13\% for the same number of assigned frequencies with $\numlink=150$. Moreover, if we consider the enhanced Hybrid and the base-line algorithm (HEDGE), then increasing the number of links resulted in nearly 20\% improvement in terms of the number of assigned frequencies. We also observed similar enhancements when we used the measured data that includes the effects of the 3-D terrain. \tbirkan{Moreover, we compared the time cost of our proposed methods with a Tabu search, and found our motivation to design simple and effective randomized greedy algorithms to be justified.} Therefore, by incorporating the randomization, the proposed algorithms significantly enhanced the heuristics for the FAP. Future work will consider more general system configurations such as multiple antennas and polarization.


\bibliographystyle{ieeetr}
\bibliography{bib_jcn_fap}

\epsfysize=3.2cm
\begin{biography}{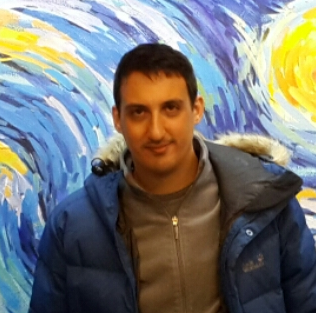}{H. Birkan Yilmaz}
received his M. Sc. and Ph.D. degrees in Computer Engineering from Bogazici University in 2006 and 2012, respectively, his B.S. degree in Mathematics in 2002. Currently, he works as a post-doctoral researcher at Yonsei Institute of Convergence Technology, Yonsei University, Korea. He holds TUBITAK National Ph.D. Scholarship and is a member of TMD (Turkish Mathematical Society). His research interests include cognitive radio, spectrum sensing, molecular communications, and detection and estimation theory.
\end{biography}

\epsfysize=3.2cm
\begin{biography}{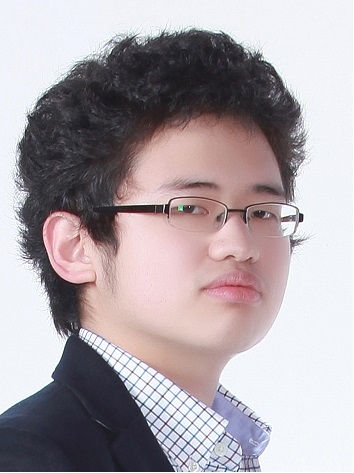}{Bon-Hong Koo}
received his B.S. degree in the School of Integrated Technology from Yonsei University, Korea, in 2014. He is now with the School of Integrated Technology, at the same university and is working toward the Ph.D. degree. His research interest includes emerging communication technologies. \\
\end{biography}

\epsfysize=3.0cm
\begin{biography}{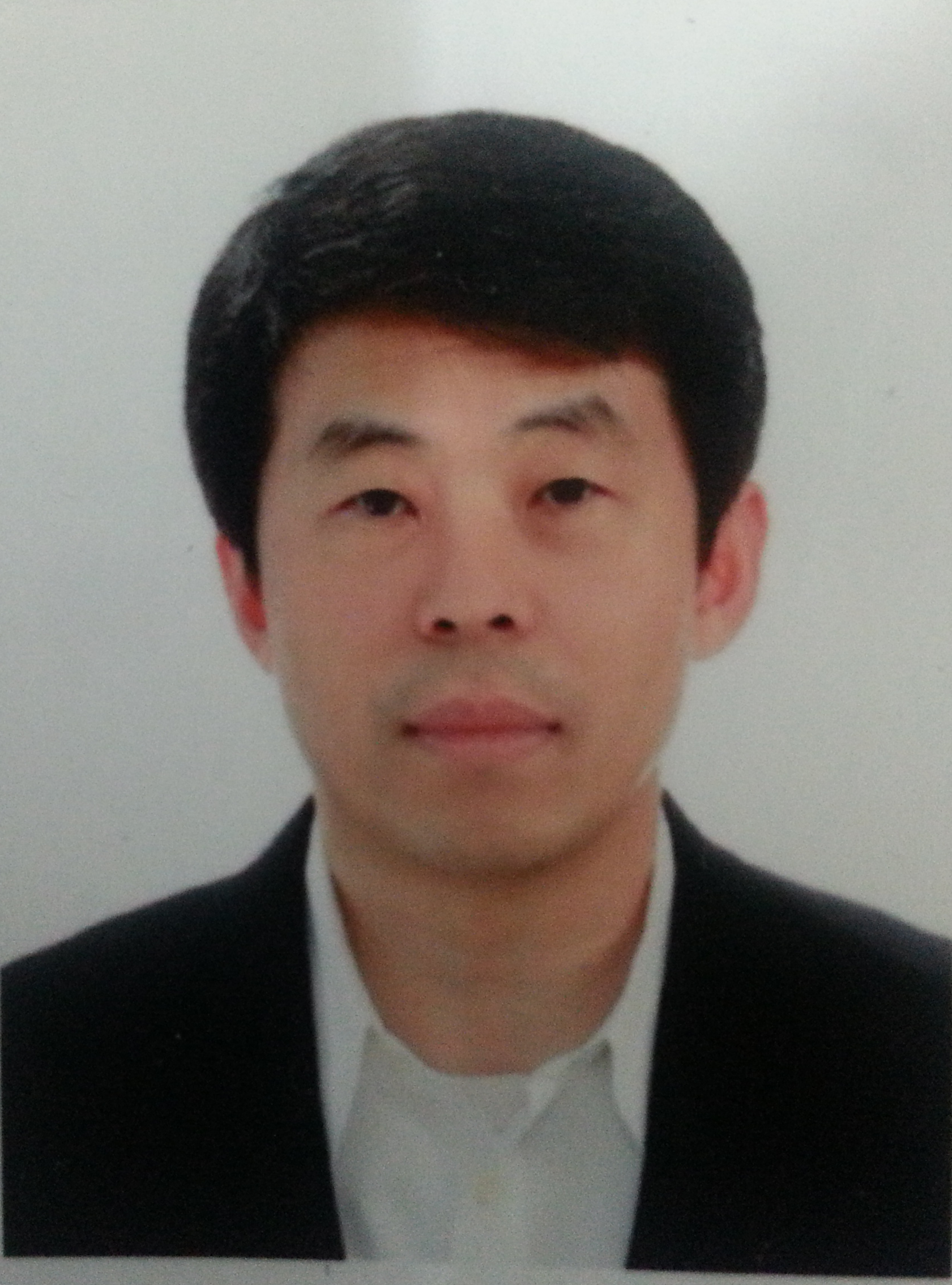}{Sung-Ho Park} was born in Seoul, Korea in 1962. He received his B.S. degree in Mathematics from Korea University, Seoul, Korea in 1984. He is now with Open SNS in Seoul, Korea as Vice President. His research interest includes radio propagation for military communications. \\ \\
\end{biography}

\epsfysize=3.0cm
\begin{biography}{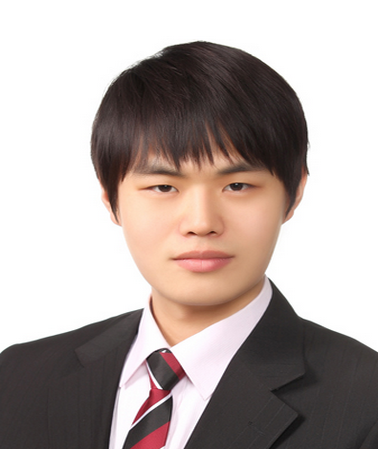}{Hwi-Sung Park}
received the B.S. degrees in Electronic and Electrical Engineering from Sungkyun-kwan University, Korea in 2012, and the M.S. degree in Electrical and Electronic Engineering from KAIST, Korea in 2014. He is currently a researcher in the Department of the 2nd R\&D Institute-1, Agency for Defense Development. His research has been focused on spectrum management in tactical networks and link adaptation in multi-RAT networks.
\end{biography}

\epsfysize=3.0cm
\begin{biography}{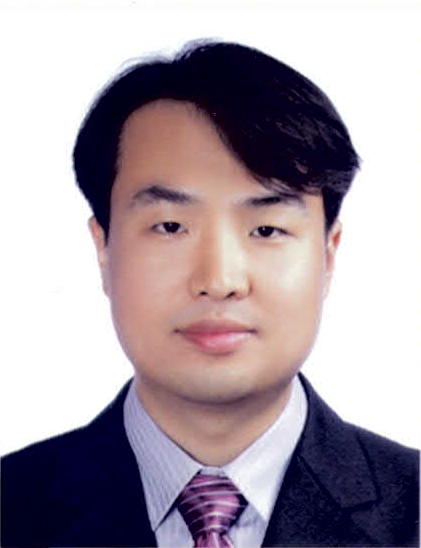}{Jae-Hyun Ham}
received his B.S. degree in Computer Science and Engineering from Dongguk University, Korea, in 1999, and his M.S. degree in Computer Science and Engineering from POSTECH, Korea, in 2001. He joined the Agency for Defense Development, Korea, in 2001, where he is working currently as a senior researcher in the Department of the 2nd R\&D Institute-1. His research interests include tactical network \& spectrum management, mobile ad hoc network, and traffic monitoring and analysis.
\end{biography}

\epsfysize=2.9cm
\begin{biography}{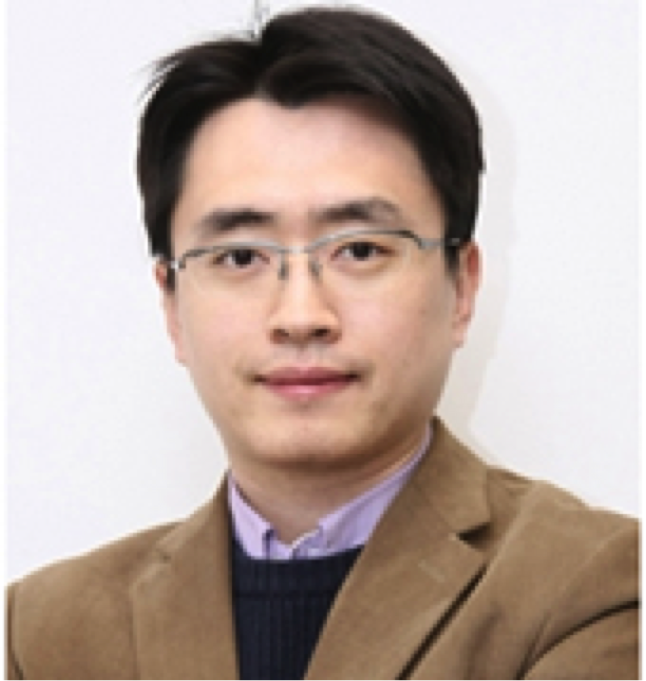}{Chan-Byoung Chae}
('S06-'M09-'SM12) is Associate Professor in the School of Integrated Technology, College of Engineering, Yonsei University, Korea. He was a Member of Technical Staff (Research Scientist) at Bell Laboratories, Alcatel-Lucent, Murray Hill, NJ, USA from June 2009 to Feb 2011. Before joining Bell Laboratories, he was with the School of Engineering and Applied Sciences at Harvard University, Cambridge, MA, USA as a Post-Doctoral Research Fellow. He received the Ph.D. degree in Electrical and Computer Engineering from The University of Texas (UT), Austin, TX, USA in 2008, where he was a member of the Wireless Networking and Communications Group (WNCG). Prior to joining UT, he was a Research Engineer at the Advanced Research Lab., the Telecommunications R\&D Center, Samsung Electronics, Suwon, Korea, from 2001 to 2005. He was a Visiting Scholar at the WING Lab, Aalborg University, Denmark in 2004 and at University of Minnesota, MN, USA in August 2007. While having worked at Samsung, he participated in the IEEE 802.16e standardization, where he made several contributions and filed a number of related patents from 2004 to 2005. His current research interests include capacity analysis and interference management in energy-efficient wireless mobile networks and nano (molecular) communications. 

He has served/serves as an Editor for the IEEE Trans. on Wireless Communications, IEEE Trans. Molecular, Biological, and Multi-scale Comm., IEEE Trans. on Smart Grid, IEEE ComSoc Technology News, and IEEE/KICS Jour. of Comm. Networks. He was a Guest Editor for the IEEE Jour. Sel. Areas in Comm. (special issue on molecular, biological, and multi-scale comm.). He is an IEEE Senior Member.
Dr. Chae was the recipient/co-recipient of the IEEE INFOCOM Best Demo Award (2015), the IEIE/IEEE Joint Award for Young IT Engineer of the Year (2014), the KICS Haedong Young Scholar Award (2013), the IEEE Signal Processing Magazine Best Paper Award (2013), the IEEE ComSoc AP Outstanding Young Researcher Award (2012), the IEEE Dan. E. Noble Fellowship Award (2008), two Gold Prizes (1st) in the 14th/19th Humantech Paper Contest, and the KSEA-KUSCO scholarship (2007). He also received the Korea Government Fellowship (KOSEF) during his Ph.D. studies.
\end{biography}
\end{document}